\newcommand{\ubar}[1]{\underaccent{\bar}{#1}}
\DeclareMathOperator\supp{supp}
\newtheorem{theorem}{Theorem}[section]
\newtheorem{proposition}[theorem]{Proposition}
\newtheorem{lemma}[theorem]{Lemma}
\newtheorem{corollary}[theorem]{Corollary}
\newtheorem{claim}[theorem]{Claim}
\theoremstyle{definition}
\newtheorem{definition}[theorem]{Definition}
\definecolor{backcolour}{rgb}{0.63, 0.79, 0.95}
\lstdefinestyle{mystyle}{
  backgroundcolor=\color{backcolour},
  basicstyle=\ttfamily\footnotesize,
  breakatwhitespace=false,         
  breaklines=true,                 
  captionpos=b,                    
  keepspaces=true,                 
  numbers=left,                    
  numbersep=5pt,                  
  showspaces=false,                
  showstringspaces=false,
  showtabs=false,                  
  tabsize=2
}
\providecommand{\keywords}[1]{\textbf{\textit{Keywords:}} #1}
\providecommand{\jel}[1]{\textbf{\textit{JEL Classifications:}} #1}
\begin{document}
\title{Attraction Via Prices and Information}
\author{Pak Hung Au\thanks{Hong Kong University of Science and Technology. Email: \href{mailto:aupakhung@ust.hk}{aupakhung@ust.hk}.} \and Mark Whitmeyer\thanks{Arizona State University. Email: \href{mailto:mark.whitmeyer@gmail.com}{mark.whitmeyer@gmail.com}. We began this project while the second author was at the Hausdorff Center for Mathematics at the University of Bonn, supported under the DFG Project 390685813. This paper subsumes \cite{dia}. We thank Ralph Boleslavsky, Francesc Dilm\'{e}, Silvana Krasteva, Stephan Lauermann, Benny Moldovanu, Franz Ostrizek, Wing Suen, Joseph Whitmeyer, Kun Zhang, and various seminar and conference audiences for their useful feedback.}}

\date{\today}

\maketitle

\begin{abstract}
We study the ramifications of increased commitment power for information provision in an oligopolistic market with search frictions. Although prices are posted and, therefore, guide search, if firms cannot commit to information provision policies, there is no active search at equilibrium so consumers visit (and purchase from) at most one firm. If firms can guide search by \textit{both} their prices and information policies, there exists a unique symmetric equilibrium exhibiting price dispersion and active search. Nevertheless, when the market is thin, consumers prefer the former case, which features intense price competition. Firms always prefer the latter.
\end{abstract}
\keywords{Bayesian Persuasion, Consumer Search, Prominence}\\
\jel{C72; C73; D15; D63}

\newpage

\section{Introduction}

There are many markets in which prices guide search. Before she leaves her home to go shopping for a car or before she orders a dress online (which she will return if ill-fitting), a consumer visits a price-comparison website--or compares print advertisements distributed by various sellers--to see who is offering the best deal. Shopping is costly (both in terms of time and money) and by gathering information about prices beforehand, the consumer can search efficiently, without wasted effort.

In addition to being able to set prices, firms often also have significant control over how much match-value information a consumer’s visit will yield. They can choose whether to offer test drives, whether to allow consumers to try on clothes, the generosity and flexibility of return policies, and the length (and functionality restrictions) of free trials for software. Moreover, in many markets, this aspect of search is also advertised by firms. Just as price-comparison websites allow consumers to compare firms’ price offers, there exist similar websites that inform consumers about firms’ information provision. Therefore, consumers know not only the collection of price offers, but also the various information policies offered by firms, both of which are valuable in guiding the consumers’ search for a satisfactory product.


\bigskip

\noindent \textbf{Single-instrument competition (what has been studied so far):} By now, it is understood that allowing firms to post prices when information is exogenous, by and large, benefits consumers. Firms value being visited early in a consumer’s journey--for a position late in a consumer’s search order may mean that the firm is never even visited--and a low price helps a firm secure an early visit. With exogenous information and posted prices, search frictions intensify competition as firms compete in a quasi-Bertrand manner (\cite*{choi2}). Furthermore, as established in \cite*{avp}, a partial analog holds for this result when the price is exogenous and firms compete by posting information. As firms that provide lots of information are visited earlier by consumers, search frictions can encourage information provision, to the benefit of consumers.

In the price-competition literature and in \cite{avp}, there is a clear welfare comparison between the scenarios when prices are posted versus not and when information is posted versus not. The posted-price environment benefits the consumers by directing their search to firms charging lower prices, which in turn forces firms to compete for their visits by cutting prices. Following a similar intuition, in the information-competition setting, the posted-information environment benefits the consumers. If the information provision policies are not posted, there is an ``informational Diamond paradox:’’ firms provide no useful information and the market breaks down.


\bigskip

\noindent \textbf{Multi-instrument competition (this paper's setting):} Both the price-competition literature and \cite*{avp} fix one instrument (prices or information) and concentrate on firms competing by choosing the other. In plenty of settings, this is reasonable: certain platforms limit price competition, for instance; and in other markets, firms can control prices but information about their products (at least to a first-order approximation) is exogenous. Nevertheless, there are other settings in which firms can likely control both. Our goal in this paper is to understand competition in an oligopolistic environment with search frictions in which firms compete both by setting prices and flexibly choosing the information a consumer’s visit will reveal. 

We stipulate that prices are posted--and, therefore, direct search--but look at both the setting in which information is also posted as well as the scenario in which only prices are. Our main result is that when the market is thin, consumer welfare is higher when firms cannot post (and advertise) their information provision. Otherwise, when the market is thick, consumers prefer that both instruments be advertised. Firms always prefer that information is posted.

How can this be? When competition is limited to a single instrument, the relationship is clear: when firms compete to attract consumers, be it through prices or information, consumers benefit. The key to understanding our result here is that the ability of firms to post information not only affects the information they provide but also the prices they set. We show that when firms can only post prices, the ``informational Diamond paradox’’ persists: firms provide no useful information in any symmetric equilibrium and there is no search. However, firms still post prices, and the fact that there is no search actually \textit{intensifies} the price competition to the maximal level. Standard Bertrand competition ensues, where the market price is driven down all the way to the firms’ marginal cost. Thus, in such a market, the consumers enjoy low prices but no information is provided.

When both instruments are attraction devices, we characterize the unique symmetric equilibrium and show that it involves active search and price dispersion. Moreover, firms provide full information, which is the polar opposite of what they do in the hidden information case. This benefits the consumer by allowing her to obtain a better match, and sooner in her search. If the market is sufficiently thick, price competition is already intense enough and the consumers prefer the environment with posted information to that without.



\subsection{Related Literature}

Although the majority of the consumer search literature assumes that prices do not guide search and that consumers search randomly, a number of recent papers explore how things change when consumers' searches are directed and guided by firms' price posting. These papers include \cite*{az}, \cite*{ding}, \cite*{haan}, and \cite*{choi2}. Of particular note is \cite*{ding}, who have binary match values and consumer heterogeneity (\'{a} la \cite*{stahl}). Although we do not allow for consumer heterogeneity, we endogenize the match value distribution and show that the one proposed in \cite*{ding} is the equilibrium outcome even if firms have total freedom over consumers' (expected) match value distributions. 

The literature on information design in search markets is also related. The closest paper in this literature is our previous paper, \cite*{avp}, in which we study competitive information provision in a search market but abstract away the price instrument. \cite*{he2023competitive} show that the ``informational Diamond paradox'' identified in \cite*{avp} persists even when signals are public, provided consumers cannot direct their search. Also related is \cite*{hu}, who explore consumer-optimal information structures in a large-market version of \cite*{asher}'s undirected search setup. \cite*{pease} and \cite{lyu2023information} look at \textit{pre-visit} information about a seller's good.

\cite{krasteva} also study information provision and pricing in search markets. Crucially, their environment is one \textit{without} price comparison websites: following the \cite{asher} paradigm, search is undirected and consumers observe prices upon visiting firms. They also incorporate the \cite{stahl} setup of having a mix of zero search-cost ``shoppers'' and costly searchers, which is needed to stymy the Diamond paradox. Notably, they obtain full-disclosure only when search costs are very low or high. \cite*{gar} and \cite*{hkb} also explore an environment in which firms compete through price setting and information provision, but in a market absent search frictions. That is, in their papers consumers freely observe every firm's price and signal realization.

\cite{sato2023information} and \cite{sato2023persuasion} both study Bayesian persuasion by a planner in a directed search environment. The former characterizes the set of possible choice probabilities for each sender in the market. The planner in the latter maximizes a sum of per-visit fees. \cite*{ke2022information}'s planner is a platform that extracts surplus from sellers who vie for prominent positions. \cite*{mekonnen2023persuaded} also explore a planner's problem, but in a large-market undirected search setting. There, the planner endeavors to string a searcher along: it charges a per-period fee, seeking simultaneously to prolong search and extract rents.

Finally, there has been a recent explosion of papers studying information design by a planner in oligopolistic markets. One such paper is \cite*{elliott2021market}, who study the possible market outcomes an information designer can produce. \cite*{bergemann2015limits}, in turn, conduct such an exercise in a market with a single monopolist. In \cite*{elliott2022matching}, the information designer can also control the \textit{matchings} of firms with consumers; and in \cite{limismatch} \textit{only} the matchings can be controlled (so there is no information design). An important distinction between these works and ours is that we have multiple strategic (and competing) information designers, who can only control information about their idiosyncratic match values.
\section{Model}

There are \(n\) identical firms supplying horizontally differentiated products, and one representative consumer with unit demand. The match values of the firm's products with the consumer are independently and identically distributed random variables, denoted by \(x_i\) for firm \(i\), taking values either \(0\) or \(1\), with \(\mu \equiv \mathbb{P}\left(1\right) \in \left(0,1\right)\). For simplicity, the marginal cost of production is \(0\), and the consumer's outside option is also \(0\). The consumer conducts a sequential search for a satisfactory product produced by (at most) one of the \(n\) firms. 

Each firm \(i\) simultaneously sets a price \(p_i \geq 0\) and chooses a signal. It is standard to model the chosen signal as a 
distribution over posterior (expected) values \(F_i\) that must be Bayes-plausible, i.e., \(\mathbb{E}_{F_i}\left[x\right] = \mu\) and \(F_i\) is supported on a subset of \(\left[0,1\right]\). Denote the set of Bayes-plausible distributions over posterior match values by  \(\mathcal{F}\). The distribution over posteriors chosen by firm \(i\), \(F_i\), determines the information that the consumer can acquire upon visiting the firm after incurring a search cost of \(c \in \left[0,\mu\right)\). Specifically, after paying the search cost, the consumer observes the signal realization of the firm and forms a posterior belief about her match value with its product.

The consumer's sequential search is directed by available market information. We focus on two cases. In the first--which we term \textbf{Hidden Information}--firms can only publicly post their prices, but not their information provision policies, at the outset of the consumer's search. We make the mild stipulation of ``no signaling what you don't know'' about the consumer's beliefs: upon seeing a deviation by a firm, the consumer's belief about the information provided by other firms is unchanged. With hidden information, the consumer's sequential search is guided by the collection of posted prices and the consumer's conjectures about firms' information provision. In the second case--which we term \textbf{Posted Information}--not only the firms' prices but also their signals (though not the signal realizations) are posted publicly. Therefore, both the information policies and prices influence the consumer's sequential search. 

The chosen firm \(i\) (if any) receives a payoff of \(p_i\), and the others, \(0\). The consumer receives (expected) payoff \(x_i - p_i - k c\) if she purchases from some firm \(i\) or \(- k c\) if she takes her outside option, where \(0 \leq k \leq n\) is the number of firms she visits before her stopping decision. We also impose that this is search with costless recall: the consumer can return at no cost to any firm whom she has already visited, in order to purchase its product. Moreover, the consumer may not purchase from any theretofore unvisited firm.\footnote{This assumption is standard in the literature and allows us to use the apparatus of \cite*{wei} to characterize the search problem in a parsimonious way.}

The timing of the game is as follows:
\begin{enumerate}[noitemsep,topsep=0pt]
    \item Firms simultaneously set prices and choose distributions of the posterior match values.
    \item The consumer observes the collection of prices (and match value distributions, in the posted information case).
    \item The consumer embarks on her search, deciding whom to visit and when to stop.
\end{enumerate}

Firms can play mixed strategies over both prices and signals. If they do, the randomization is resolved before the consumer embarks on her search. We adopt the solution concept of symmetric weak perfect Bayesian Nash equilibrium with the aforementioned ``no signaling what you don't know'' specification. 

\subsection{Simplifying the Consumer's Search Problem}\label{simplify section}
Our goal is to determine the equilibrium pricing and information provision by the firms. To this end, we work backward and first describe the consumer's search behavior, which pins down each firm's demand. The seminal result of \cite*{wei} identifies the consumer's optimal behavior. For each distribution \(F_{i}\) and price \(p_i\) chosen by firm \(i\), we define the corresponding reservation value, \(U\left(F_{i}, p_i\right)\), implicitly as the solution to the following equation (in \(U\)): 
\[c=\int_{U+p_i}^{1}\left(x-p_i-U\right) dF_{i}\left(x\right) \text{ .}\label{reservation value} \tag{\(1\)}\]

The set of feasible reservation values is \[\left\{ U\left(F_{i}, p_i\right) \colon F_{i}\in \mathcal{F}, p_i \geq 0\right\} =\left[ \ubar{U}-p_i,\bar{U}-p_i\right] \text{,}\] where \(\ubar{U}-p_i \equiv \mu -c-p_i\) and \(\bar{U}-p_i \equiv \frac{\mu-c}{\mu}-p_i\). The lower bound is induced by any Bayes-plausible distribution whose support is entirely (weakly) above \(\ubar{U}\), one of which is the degenerate distribution at \(\mu\) (which corresponds to a completely uninformative signal). The upper bound is uniquely induced by the Bayes-plausible distribution supported on \(\left\{0,1\right\}\) (which corresponds to a fully revealing signal). It is not difficult to see that any intermediate reservation value can be achieved by some Bayes-plausible distribution.\footnote{For instance, by some convex combination of the two distributions above.} The reservation value rewards informativeness: for any pair of feasible distributions \(F_{i}\) and \(G_{i}\), if \(F_{i}\) is a mean-preserving spread of \(G_{i}\), then \(U\left( G_{i},p_i\right) \leq U\left( F_{i},p_i\right)\). 

The optimal strategy of the consumer (\hypertarget{pandora}{\textbf{Pandora's rule}}) is as follows.
\begin{itemize}[noitemsep,topsep=0pt]
\item Selection rule: If a firm is to be visited and examined, it should be the unvisited firm with the highest reservation value.
\item Stopping rule: Search should be stopped whenever the maximum reservation value of the unvisited firms is lower than the maximum sampled
reward or the outside option.
\end{itemize}
It turns out that the consumer's search problem can be simplified even further. We define the effective value of firm \(i\) to be the random variable 
\[W_{i}\equiv \min \left\{ x_i-p_{i},U\left(F_{i},p_i\right) \right\} \text{ ,}\] where \(x_i - p_{i}\in \left[ 0,1\right]\) is the realized posterior match value minus the price charged by firm \(i\) and \(F_{i}\in \mathcal{F}\) is the distribution over posteriors offered by firm \(i\). 

In a remarkable result, \cite*{choi2} and \cite*{am} show that the consumer purchases from firm \(i\) if and only if it has the highest effective value realization, i.e., \(W_{i}=\max_{j=1,2,...,n}W_{j}\), and that its effective value exceeds the consumer's outside option. Thus, we can reformulate the firms' competition to a simpler static problem: firms compete by setting prices and choosing distributions over effective values. 

Firm \(i\)'s effective value, \(W_{i}\), has distribution
\[H\left(w; F_{i}, p_i\right) \equiv \begin{cases}
F_{i}\left(w+p_i\right) \quad &\text{if} \quad -p_i \leq w < U\left(F_{i},p_i\right) \\ 
1 \quad &\text{if} \quad w \geq U\left( F_{i},p_i\right)
\end{cases}\text{ .}  \label{eff-value dist.}\tag{\(2\)}
\]
A firm's effective-value distribution must be supported on a subset of \(\left[-p_i, \bar{U}-p_i\right]\) and have mean \(\mu-c-p_i\). As noted by \cite*{avp}, there are further conditions this distribution must satisfy (see Lemmas 3.4 and 3.5 of \cite*{avp}).

\section{Hidden Information}\label{hiddeninfosection}

We begin with the case in which prices are posted but information policies are hidden. Given the collection of posted prices, the consumer forms a conjecture about the distribution over posteriors \(\tilde{F}_i\) at each firm \(i\). Naturally, the conjecture must be correct in equilibrium. Based on these posted prices and conjectures, the consumer then uses \hyperlink{pandora}{Pandora's} rule to determine her optimal search protocol. We specify ``no signaling what you don't know:'' any deviation by a firm leaves the consumer's beliefs about the other firms' strategies unchanged. Note that it is only here, when information is hidden, that this specification has bite--when everything is posted (in \(\S\)\ref{postedinfosection}), nothing is hidden so there are no conjectures to be made.

The following definition is useful in stating the main result of this section.
\begin{definition}
    A distribution over posteriors provides useless information if and only if its support is contained in \(\left[\ubar{U},1\right]\). Otherwise, we say the distribution provides useful information.
\end{definition}
A distribution with support contained in \(\left[\ubar{U},1\right]\) is useless information because its realization never has any effect on the consumer's optimal search strategy. If a firm pricing at \(p_i\) is expected to provide useless information, its conjectured reservation value is \(\ubar{U}-p_i\). If this value is negative, the firm is never visited. If this value is non-negative, conditional on a visit, the consumer must find it sequentially optimal to buy from this firm immediately regardless of the posterior realization. This is because under \hyperlink{pandora}{Pandora's} rule, the consumer visits firm \(i\) if and only if she has no options at hand that are better than firm \(i\)'s reservation value, and she sees no other unvisited firms with a higher reservation value. 

In contrast, if a distribution provides useful information, its realization can potentially affect the consumer's decision. If a firm pricing at \(p_i\) is expected to offer a signal \(\tilde{F}_i\) that provides useful information, its conjectured reservation value \(U\left(\tilde{F}_i, p_i\right)\) is strictly above \(\ubar{U}-p_i\). Moreover, by the reservation-value equation (\ref{reservation value}), \(\tilde{F}_i\) must assign positive probability to posteriors strictly above \(U\left(\tilde{F}_i, p_i\right)+p_i\), as well as to posteriors strictly below \(\ubar{U}+p_i\). In the former case, the consumer will stop her search immediately, whereas in the latter case, she may continue her search journey.     


The main result of this subsection is that when prices are posted but information is hidden, all firms post a zero price and provide useless information in equilibrium. The intuition is that firms' information provision policies do not guide search--instead it is the consumer's conjectures about the information provision policies that do--a firm can provide less information without hurting its position in the consumer's search order. In fact, firms have an incentive to assign as much weight as possible to the posterior just above the consumer's stopping threshold, resulting in useless information to the consumer--an ``\textit{informational Diamond paradox}.'' But if all firms provide no information, there is no product differentiation. As prices guide search, competition reduces to standard Bertrand competition; and, hence, the marginal cost pricing outcome.

The following lemma is key to the main result stated above.
\begin{lemma}\label{onlyonelemma}
    When information is hidden, in any symmetric equilibrium, firms provide useless information and the consumer visits only one firm.
\end{lemma}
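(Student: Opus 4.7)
The plan is to argue by contradiction. Suppose $(F^*, p^*)$ is a symmetric equilibrium in which $F^*$ provides useful information, so $U^* := U(F^*, p^*) > \ubar{U} - p^*$. Under ``no signaling what you don't know,'' the consumer conjectures $\tilde F_i = F^*$ for every firm on the equilibrium path, so all firms share the reservation value $U^*$ in her eyes. Breaking ties by a uniformly random permutation, she visits firms in that order, stops at the first whose realization $Y_i := x_i - p^* \geq U^*$, and---if she reaches the end without stopping---buys from $\arg\max_j Y_j$ (taking the outside option if that maximum is negative). Crucially, firm $i$'s position in the order depends only on the conjecture and not on its actual choice of $F_i$.

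The first step I would take is to express firm $i$'s sale probability as $\int_0^1 V(u)\,dF_i(u)$ for an explicit $V$. Decomposing the winning event into ``stop-at-$i$'' (meaning $i$ is visited and $Y_i \geq U^*$) and ``continuation-win'' (every $Y_j < U^*$, $Y_i$ is the maximum, and $Y_i \geq 0$), and aggregating the stop-at-$i$ probability over the uniformly random visit position using the i.i.d.\ conjecture $F^*$ for opponents, I obtain
\[
V(u) =
\begin{cases}
0 & u \in [0, p^*),\\
F^*(u)^{n-1} & u \in [p^*, U^* + p^*),\\
G & u \in [U^* + p^*, 1],
\end{cases}
\qquad G := \frac{1 - (1 - q^*)^n}{n q^*},\quad q^* := 1 - F^*(U^* + p^*).
\]
The defining feature of $V$ is a strict upward jump at $u = U^* + p^*$: on $[0, U^* + p^*)$ one has $V(u) \leq (1 - q^*)^{n-1}$, whereas $V = G$ on $[U^* + p^*, 1]$, and since $G = \tfrac{1}{n}\sum_{k=0}^{n-1}(1-q^*)^k$ strictly exceeds $(1-q^*)^{n-1}$ whenever $q^* \in (0,1)$.

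Next I would exploit this jump via a mean-preserving ``surgery.'' Because $c > 0$, the reservation-value equation forces $F^*$ to place positive mass on $(U^* + p^*, 1]$; pick $u_1 > U^* + p^*$ in the support. Because $F^*$ is useful, it also places positive mass on $[0, \ubar{U})$; pick $u_0 < \ubar{U} \leq U^* + p^*$ in the support. Construct $F_i'$ from $F^*$ by transferring an amount $\Delta$ of mass from $u_1$ down to $U^* + p^*$ and a balancing amount $\Delta' = \Delta(u_1 - U^* - p^*)/(U^* + p^* - u_0)$ from $u_0$ up to $U^* + p^*$; the two adjustments cancel in mean, so $F_i'$ remains Bayes-plausible. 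The resulting change in $\int V\,dF_i$ is
\[
\Delta\,(G - G) + \Delta'\bigl(G - V(u_0)\bigr) = \Delta'\bigl(G - V(u_0)\bigr) > 0,
\]
since $V(u_0) \leq (1 - q^*)^{n-1} < G$. Firm $i$'s profit strictly increases, contradicting the equilibrium hypothesis, so $F^*$ must be useless.

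Finally, once $F^*$ is useless one has $U^* = \ubar{U} - p^*$ and every realization satisfies $Y_i \geq \ubar{U} - p^* = U^*$, so by Pandora's rule the consumer stops (and purchases) the instant she visits the first firm; hence she visits only one firm. The main obstacle is the bookkeeping behind $V$---particularly verifying the ``continuation-win'' factor $F^*(u)^{n-1}$ and the aggregation over visit positions that produces the coefficient $G$; once $V$ is in hand, the surgery and its positive sign are a one-line calculation.
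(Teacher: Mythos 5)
Your construction of the deviation payoff $V$ and the mean-preserving surgery is a correct and more explicit version of the paper's argument for the case of \emph{positive} equilibrium prices: both arguments rest on the observation that the per-visit payoff is flat at its maximum above the posterior threshold $U^*+p^*$ and strictly lower below it, so pooling mass toward the threshold is profitable and contradicts the conjectured reservation value. (Two small repairs you'd want: move the transferred mass to $U^*+p^*+\epsilon$ rather than exactly to the threshold, so that you do not depend on the consumer's tie-break at $Y_i=U^*$ --- the paper handles this explicitly --- and note that the lemma covers mixed strategies over prices and signals, whereas your $V$ is computed for a pure symmetric profile $(F^*,p^*)$; the paper works with an arbitrary pair in the support of the mixed strategy.)

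The genuine gap is the zero-price case. Your contradiction is that ``firm $i$'s profit strictly increases,'' but profit is $p^*\times\int V\,dF_i'$, so if $p^*=0$ the surgery raises the sale probability while leaving profit at $0$: no contradiction. A symmetric equilibrium with $p^*=0$ and useful information is therefore not excluded by your argument, and it cannot be excluded by any information-only deviation. The paper rules it out with a different deviation: if useful information is provided, posteriors strictly below $\ubar{U}$ realize with some probability $\delta>0$ at each firm, so a firm that deviates to a small positive price $\varepsilon/2$ is still visited with probability at least $\delta^{n-1}$ (its conjectured reservation value remains above $\ubar{U}-\varepsilon$ even under the worst belief about its hidden signal), and it can then convert that visit into a sale at a positive price, e.g.\ by providing no information. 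You need to add this price-deviation step (and, relatedly, the preliminary observation that the consumer must search with positive probability in any equilibrium, which your argument also presupposes via $U^*\geq 0$) for the proof to cover all symmetric equilibria.
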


The logic of the lemma is explained below. It is immediate that the consumer makes at least one visit in equilibrium--otherwise, a firm can profitably deviate to price below \(\ubar{U}\) to secure a visit and sale. Consider two possibilities concerning the equilibrium expected profit--whether it is positive or zero. In the case of positive profit, positive prices must be charged with probability one. In particular, suppose price \(p_i>0\) and distribution \(F_i\) is on the support of the firms' (possibly mixed) equilibrium strategy. Given \hyperlink{pandora}{Pandora's} rule, whenever the firm is able to realize a posterior no less than  \(U\left(F_{i}, p_i\right)+p_i\), it can convert all visits to sales. The firm's payoff function (per consumer visit) in the realized posterior \(x_i\) is, thus, equal to \(p_i\) (the highest possible level) for all \(x_i \geq U\left(F_{i}, p_i\right)+p_i\). 

Therefore, if the prior match value \(\mu<U\left(F_{i}, p_i\right)+p_i\), the firm's optimal signal is supported on \(\left[0, U\left(F_{i}, p_i\right)+p_i\right]\). The reservation-value equation (\ref{reservation value}) implies that the actual reservation value of firm \(i\) is strictly below \(U\left(F_{i}, p_i\right)\), a contradiction. If \(\mu \geq U\left(F_{i}, p_i\right)+p_i\), the optimal signal must ensure the consumer stops and buys with probability one (for instance, those supported on \(\left[U\left(F_{i}, p_i\right)+p_i,1\right]\) meaning that it is useless information). In equilibrium, the consumers' belief about the firm's hidden information policy must be correct. Consequently, she anticipates no useful information from any firm, and never visits more than one firm.

Consider next the case of zero equilibrium profit, in which case every firm posts a zero price. Suppose for the sake of contradiction that with positive probability, some signal with a reservation value strictly above \(\ubar{U}\) is offered in equilibrium.\footnote{This allows for the possibility that the firm plays a mixed strategy in distributions over posteriors.} As posteriors strictly below \(\ubar{U}\) are realized with a positive probability, there exists a profitable deviation--a (small) positive price may still yield strictly positive sales. To see this, suppose the equilibrium has each firm realizing posteriors strictly below \(\ubar{U}-\varepsilon\) with some positive probability \(\delta\). Then if an individual firm deviates to price at \(\varepsilon/2\), its expected profit is strictly positive. This is because even if the consumer entertains the worst possible belief about the information offered by the deviating firm, the conjectured reservation value of the firm is still strictly above \(\ubar{U}-\varepsilon\), so the consumer is willing to visit it after observing posteriors below \(\ubar{U}-\varepsilon\) at all the other firms, which is an event with probability at least \(\delta^{n-1}\) by hypothesis. Moreover, the deviating firm can guarantee a positive expected profit conditional on the consumer's visit by, say, offering no information. Therefore, the only possible equilibrium in this case has all firms offering useless information. For the same reason as the case above, the consumer makes one and only one visit.

As no firm provides useful information, competition reduces to plain vanilla undifferentiated Bertrand competition, in which the unique equilibrium outcome is to price at marginal cost, which in this case is \(0\). Consequently, a symmetric equilibrium necessarily has all firms pricing at \(0\) and providing useless information. It can be readily verified that this strategy indeed constitutes a symmetric equilibrium, which yields 
\begin{proposition}\label{diamond}
When information is hidden, there is a(n essentially)\footnote{The distribution over effective values in any symmetric equilibrium is degenerate and the equilibrium price equals firms’ marginal cost, which is \(0\). Any randomization over actual values supported entirely above \(\mu - c\) yields this effective-value distribution and all are behaviorally equivalent.} unique symmetric equilibrium. In it, firms provide useless information and charge price \(p = 0\). The consumer stops and buys from the first visited firm. 
\end{proposition}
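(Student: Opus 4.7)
My plan is to leverage Lemma \ref{onlyonelemma} to reduce the problem to a textbook Bertrand game, then verify that the proposed profile survives the off-path belief restrictions. Concretely, the lemma already pins down the informational component of the equilibrium (all firms provide useless information, so each firm's induced reservation value at price $p_i$ is $\ubar{U}-p_i$) and the consumer's search behavior (a single visit, followed by immediate purchase since useless information gives her nothing to learn and $\ubar{U}=\mu-c>0$). So what remains is to pin down the price and check the equilibrium conditions.

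Given the lemma, the consumer's selection rule ranks firms by $\ubar{U}-p_i$, i.e., she visits whichever firm posts the strictly lowest price (ties broken uniformly). Each firm therefore faces undifferentiated Bertrand competition for a single visit that converts to a sale with probability one. I would run the standard argument: any candidate symmetric price $p^*>0$ is broken by a small undercut, which discretely jumps the deviator's demand from $1/n$ to $1$ and yields profit $(p^*-\varepsilon)\cdot 1 > p^*/n$ for small $\varepsilon$. Hence $p^*=0$ is the only candidate.

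For verification, I would check two families of deviations at the candidate profile (all firms pricing at $0$ with useless information). First, a pure price deviation to $p_i>0$ (holding information useless): the consumer's conjecture gives the deviator reservation value $\ubar{U}-p_i<\ubar{U}$, so she visits and buys from one of the zero-price firms and never reaches the deviator—profit is $0$. Second, any deviation that switches to useful information: since the consumer only observes prices, her conjecture about firm $i$'s signal is a function of $p_i$ alone, so the choice of information has no effect on the search order; if $p_i>0$ the deviator is again never visited, and if $p_i=0$ the deviator's ex post payoff per visit is still $0$. Either way, deviation is not profitable. Essential uniqueness (up to behavioral equivalence of value distributions supported above $\mu-c$) is immediate from the fact that the effective-value distribution is degenerate at zero.

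The main obstacle is the off-path specification. The ``no signaling what you don't know'' condition fixes beliefs about non-deviating firms, but the consumer's belief about the deviator itself after an unexpected price still needs to be specified to support the equilibrium. I would invoke the same pessimistic conjecture used in the sketch of Lemma \ref{onlyonelemma}—namely, that a firm posting an out-of-equilibrium price is still conjectured to offer useless information—and note that this is the belief least favorable to deviations, so showing no deviation is profitable under it suffices. All other steps are routine.
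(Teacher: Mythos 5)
Your proposal is correct and follows essentially the same route as the paper: Lemma \ref{onlyonelemma} reduces the game to undifferentiated Bertrand competition (forcing $p=0$), and sustainability is verified by assigning the off-path conjecture that a deviator posting a positive price provides useless information, so it is never visited. The only cosmetic difference is that you spell out the undercutting step and the belief restriction in more detail than the paper does; just make sure your "standard argument" is understood to also rule out nondegenerate symmetric price distributions, as the paper explicitly notes that prices near the supremum of a mixed support earn vanishing profit.
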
\begin{proof} 
    It is easy to sustain the strategy described as an equilibrium. Upon seeing a strictly positive price by a firm (which is a deviation), the consumer conjectures that it provides no information. The deviator is never visited, so its profit remains \(0\). There are obviously no profitable deviations to other distributions given the market price of \(0\).

    Uniqueness is an immediate consequence of Lemma \ref{onlyonelemma} As the equilibrium necessarily has all firms supplying useless information and the consumer visiting only one firm, only the firm(s) with the lowest posted price will be visited. Every nondegenerate price distribution is suboptimal as prices close to the supremum of the support yield profit arbitrarily close to \(0\). The standard logic of Bertrand competition of undifferentiated products then implies the only equilibrium price is \(0\).
    \end{proof}

\subsection{No Posting = No Commitment}

This paper inhabits the Bayesian-persuasion paradigm (\cite{kam}), in which information designers have the power to commit to information policies. What if the firms did not have this commitment power, but were instead privately informed and could only transfer information via cheap talk? It turns out that when information is hidden, it makes no difference whether firms can commit to information policies at all. That is, with search frictions, the inability to post information policies is equivalent to the inability to commit to information policies.

More formally, suppose that each firm is \textit{ex ante} informed about the match value of its product.\footnote{When all communication is via cheap talk, it makes no difference how informed firms are about their match values.} As before, firms simultaneously set prices, which guide search. However, now each firm of each type chooses a distribution over messages, the realization of which is revealed to the consumer upon visiting that firm.
\begin{proposition}
When communication is via cheap-talk, there is a(n essentially) unique symmetric equilibrium. In it, firms provide useless information and charge price \(p = 0\). 
\end{proposition}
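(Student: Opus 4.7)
The plan is to adapt the proof of Proposition \ref{diamond} by replacing the commitment-based deviation analysis in Lemma \ref{onlyonelemma} with a cheap-talk incentive compatibility argument. The two-step structure is the same: first, show that in any symmetric equilibrium each firm provides useless information; second, invoke undifferentiated Bertrand competition to conclude $p = 0$. Existence is trivial, since babbling at price $0$ is a mutual best response against itself.

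The key observation specific to cheap talk is that a firm's payoff in this game depends only on the consumer's \emph{perception} of it---the induced posterior---and not on its true match value. When the consumer buys from firm $i$, the firm receives $p_i$ regardless of $x_i$, so both types ($x_i = 0$ and $x_i = 1$) of each firm have identical expected-payoff functions over joint (price, message) actions. Moreover, each type's payoff is weakly increasing in the posterior $q$ that the consumer assigns after observing that price and message: the effective value $\min\{q - p_i, U_i\}$ is weakly increasing in $q$, so the firm's selection probability against any competitor distribution is weakly (generically strictly) increasing in $q$.

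Together these imply that, in any symmetric equilibrium, every (price, message) pair on the support of the equilibrium strategy must induce the same posterior---otherwise both types would strictly prefer the pair inducing the higher posterior and strictly avoid the lower one, contradicting the latter being on-path. By Bayes' rule, this common posterior must equal the prior $\mu$, so the firm provides useless information. The analysis then reduces to standard undifferentiated Bertrand competition among $n$ firms, whose unique symmetric outcome is $p = 0$. The main obstacle I anticipate is treating off-path beliefs carefully---after an unexpected message or price, the consumer's posterior is not pinned down by Bayes' rule---but since both types have the same payoff function, any candidate off-path belief deters them symmetrically, and no refinement beyond the paper's symmetric weak PBE is needed.
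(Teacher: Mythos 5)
There is a genuine gap, and it sits exactly where the equilibrium lives. Your argument that all on-path (price, message) pairs induce the same posterior rests on the firm's payoff being \emph{strictly} increasing in the induced posterior $q$, but the payoff from a sale is $p_i$ times the selection probability, so at $p_i=0$ the firm earns zero regardless of $q$ and both types are completely indifferent over messages. Sender incentive compatibility therefore places no restriction on communication at zero prices---fully revealing cheap talk at $p=0$ is perfectly IC---and your chain of reasoning cannot rule it out. Worse, the argument is circular as structured: you need positive prices to get strictness (hence pooling), but you need pooling (hence no differentiation) to invoke Bertrand and drive prices to zero. The paper breaks this circle by splitting into two cases on the equilibrium profit. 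In the positive-profit case, your IC/pooling logic is essentially the paper's argument (though even there the selection probability is only \emph{weakly} increasing in $q$---it is flat above the purchase threshold and at zero below the participation threshold---so the correct conclusion is that all on-path messages attain the \emph{same maximal sale probability}, i.e., the information is useless, not that all messages induce literally the same posterior). In the zero-profit case, which is the one your proof leaves open, the paper imports the deviation argument of Lemma \ref{onlyonelemma}: if useful information were communicated at zero prices, then with probability at least $\delta^{n-1}$ all rivals realize low posteriors, and a firm deviating to a small positive price $\varepsilon/2$ still has a conjectured reservation value above those realizations, gets visited, and converts the visit into a sale with positive probability, yielding strictly positive profit. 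You need this step (or an equivalent one); sender IC alone does not deliver it.

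A secondary, smaller point: your closing remark that off-path beliefs "deter both types symmetrically" elides that with cheap talk the relevant off-path object is the consumer's conjecture about the \emph{continuation communication} of a price deviator, and it is precisely because even the worst such conjecture (babbling) leaves the deviator with reservation value $\ubar{U}-p_i>0$ for small $p_i$ that the zero-price deviation argument above goes through. So the off-path belief issue is not something to be waved away; it is the engine of the missing case.
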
\begin{proof}
    Consider first the case of positive equilibrium profit. Given a positive price is charged, it is immediate the only continuation equilibrium in the communication subgame (conditional on visit) is babbling -- the sender's incentive compatibility implies that every type pools to the message with the highest probability of sales.

    The case of zero equilibrium profit is similar to that analyzed in Lemma \ref{onlyonelemma}. If useful information were communicated with positive probability, a profitable deviation to a positive price necessarily exists. Note here the communication may not be complete babbling but it must be so uninformative that the consumer never finds it worthwhile to search beyond the first visited firm.
    
    
    Via the same logic explained in Proposition \ref{diamond}, it is easy to sustain zero price and useless information as an equilibrium.\end{proof}

\section{Posted Information}\label{postedinfosection}

This section considers the case in which both the prices and the information policies are posted, so the consumer's search is directed by both. As
explained in \(\S\)\ref{simplify section}, the consumer eventually purchases from the consumer
with the highest effective value, so the competition between firms can be
viewed as the competition over effective values. Specifically, each firm's
objective is to maximize the probability that its effective value
realization is the highest among all the \(n\) firms (as well as above the
consumer's outside option). In this competition, a typical strategy of firm \(i\) takes the form \[\sigma _{i}\equiv \left( \Phi _{i},\left\{ H_{i}\left(
\cdot ;p\right) \colon p\in \supp\left( \Phi _{i}\right) \right\} \right)\text{ ,}\] where \(\Phi_{i}\in \Delta \left[ 0,1\right]\) is firm \(i\)'s price distribution,
and \(H_{i}\left( w;p\right) \) is the effective-value distribution induced by
the firm's signal strategy (possibly mixed) conditional on price \(p\in
\supp\left( \Phi _{i}\right)\). The effective value distribution induced by
the firm's strategy \(\sigma_i\) is, thus, \(\int_{0}^{1}H_{i}\left(
w;p\right) d\Phi _{i}\left( p\right)\).

In contrast to the hidden information case, the equilibrium profit must be positive when information is also posted. A firm can guarantee itself a strictly positive expected profit by, e.g.,
pricing at \(\frac{\bar{U}-\ubar{U}}{2} > 0\) and offering full
information. This produces reservation value \(\frac{\bar{U}-\ubar{U}}{2}\), which allows the firm to secure a sale with positive
probability and, hence, a positive profit (recall the mean of the effective value distribution is at most \(\ubar{U}\)).

Another major difference from the hidden information case is that it is
almost always a bad idea to offer no information when posting information is
an option. Intuitively, giving up the power of using one's information policy to attract the consumer is very costly.

\begin{lemma}
\label{no info dominated}When firms post both information policies and
prices, a strategy consisting of no information is weakly dominated. The
domination is strict unless its price is degenerate at \(\ubar{U}\).
\end{lemma}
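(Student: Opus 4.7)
The plan is to exhibit, for each no-information strategy $\sigma$, an explicit dominating strategy $\sigma'$ and verify the claimed weak (or strict) dominance by a direct profit comparison. The key observation is that a no-information signal at price $p$ induces an effective value degenerate at $\ubar{U} - p$; so against any opponent profile with maximum effective-value CDF $G$, the firm's profit equals $p \cdot G(\ubar{U} - p) \cdot \mathbf{1}\{\ubar{U} - p \geq 0\}$. By linearity, the same expression integrated against a price distribution gives the profit of a mixed no-information strategy, so it suffices to treat each price $p$ in the support separately and then aggregate.

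For the main case $p < \ubar{U}$, my candidate dominator is \emph{full information at the shifted price} $p' := p + (\bar{U} - \ubar{U})$. Full information at $p'$ induces effective value $\bar{U} - p' = \ubar{U} - p$ with probability $\mu$ and $-p' < 0$ with probability $1 - \mu$, so its profit against the same $G$ equals $(p + \bar{U} - \ubar{U}) \mu \cdot G(\ubar{U} - p)$. Using the identity $\bar{U} - \ubar{U} = \ubar{U}(1 - \mu)/\mu$, the ratio of dominator profit to original profit simplifies to $\mu + (1-\mu)\ubar{U}/p$, which exceeds one precisely when $p < \ubar{U}$. Hence $\sigma'$ weakly dominates $\sigma$ and strictly outperforms it whenever the original has any positive chance of winning, i.e.\ whenever $G(\ubar{U} - p) > 0$. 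The complementary case $p > \ubar{U}$ is essentially trivial: the original already yields zero profit (effective value negative), so any strategy with sometimes-positive profit--for instance, no information at price $\ubar{U}/2$--strictly dominates.

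The main obstacle is the boundary case $p = \ubar{U}$. There the shifted-price dominator merely ties the original (the ratio above is exactly one, and both yield zero profit under the standard tie-breaking against the outside option). I would instead take as dominator \emph{full information at the same price $\ubar{U}$}: its effective value is $\bar{U} - \ubar{U} > 0$ with probability $\mu$ and $-\ubar{U}$ with probability $1 - \mu$, yielding profit $\ubar{U} \mu G(\bar{U} - \ubar{U})$, which is non-negative for all $G$ and strictly positive for generic $G$. This gives weak but not strict dominance, matching the exception in the lemma. The threshold behavior is natural: $p = \ubar{U}$ is the unique price at which the comparison factor $\mu + (1-\mu)\ubar{U}/p$ equals one, so it is no accident that strict dominance fails exactly there.
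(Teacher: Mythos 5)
Your main case ($p<\ubar{U}$) is exactly the paper's argument: shift the price up by $\bar{U}-\ubar{U}$ and switch to full information, so the winning effective value is unchanged at $\ubar{U}-p$ while the payment conditional on winning rises from $p$ to $\mu(p+\bar{U}-\ubar{U})>p$. That part, and the aggregation over a mixed price distribution by linearity, is correct.

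The boundary case $p=\ubar{U}$ is where you go wrong, in two ways. First, your parenthetical that ``both yield zero profit under the standard tie-breaking against the outside option'' contradicts the model: in this paper an effective value of $0$ \emph{does} win against the zero outside option (this is precisely why Lemma \ref{demand} asserts $D(0;\Phi,H)>0$ and why the equilibrium profit $\bar{U}\mu(1-\mu)^{n-1}$ is positive, with $(1-\mu)^{n-1}$ being the demand at effective value $0$). If no information at $\ubar{U}$ really earned zero against every profile, it would be \emph{strictly} dominated, contradicting the lemma's exception. Second, your replacement dominator --- full information at the \emph{same} price $\ubar{U}$ --- does not weakly dominate: it earns $\ubar{U}\mu D(\bar{U}-\ubar{U})$ versus $\ubar{U}D(0)$ for the original, and against, say, opponents who all price above $\bar{U}$ (so $D(w)=1$ for all $w\geq 0$) this is $\ubar{U}\mu<\ubar{U}$, i.e.\ strictly worse. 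The correct move is the one you discarded: keep the shifted dominator, full information at price $\bar{U}$, which realizes effective value $0$ with probability $\mu$ and collects $\bar{U}$, giving profit $\bar{U}\mu D(0)=\ubar{U}D(0)$ --- \emph{identical} to the no-information profit against every opponent profile. That payoff equivalence (never worse, never better) is exactly the sense in which the domination is weak but not strict at $\ubar{U}$; the paper further notes there exist opponent profiles against which no information at $\ubar{U}$ is a best response, which rules out strict domination by \emph{any} strategy.
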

\begin{proof}
    Appendix \ref{noinfodomproof} contains the proof of this result.
\end{proof}

The next two lemmas elucidate some crucial properties of the equilibrium
demand function, in effective value, facing an individual firm.

\begin{lemma}
\label{no atom}In a symmetric equilibrium \(\left( \Phi ,H\right)\), the
implied effective-value distribution has no atom over \(\left[0,\bar{U}\right]\). 
\end{lemma}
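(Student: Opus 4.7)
The plan is to argue by contradiction using an undercutting-style deviation familiar from Bertrand models. Suppose $H$ has an atom of mass $\alpha > 0$ at some $w^{*} \in [0, \bar{U}]$; by symmetry, each firm's individual effective-value distribution has the same atom.

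First, I would rule out $w^{*} = \bar{U}$ and, more generally, the use of any $(F, p)$ with $p = 0$ in the equilibrium support. For any such pair, $W = \min\{x - p, U(F, p)\} \leq U(F, p) \leq \bar{U} - p$, so $W = \bar{U}$ forces $p = 0$; but a pair with $p = 0$ delivers expected conditional profit $0$, which cannot equal the (positive) equilibrium profit established above the lemma. Hence no pair with $p = 0$ lies in the support, and consequently $w^{*} < \bar{U}$.

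For $w^{*} \in [0, \bar{U})$, pick some $(F, p)$ in the support such that the conditional effective-value distribution $H(\cdot \mid F, p)$ assigns positive mass $\beta > 0$ to $w^{*}$. A direct substitution in the reservation-value equation yields $U(F, p - \epsilon) = U(F, p) + \epsilon$; combined with $W = \min\{x - p, U(F, p)\}$, replacing $p$ by $p - \epsilon$ shifts the entire conditional effective-value distribution upward by $\epsilon$. Consider firm $i$'s deviation that swaps $(F, p)$ for $(F, p - \epsilon)$, with small $\epsilon > 0$ chosen so that $w^{*} + \epsilon$ is a continuity point of $H$ (possible since $H$ has at most countably many atoms). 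Under the original strategy, firm $i$'s mass $\beta$ at $w^{*}$ wins with probability $q < H(w^{*})^{n-1}$: the strict inequality holds because competitors' distribution also has mass $\alpha > 0$ at $w^{*}$, so positive-probability ties occur which firm $i$ loses with positive probability under any fair tie-breaking rule. Under the deviation, the shifted mass at the continuity point $w^{*} + \epsilon$ wins outright with probability $H(w^{*} + \epsilon)^{n-1} \to H(w^{*})^{n-1}$. Hence this atom alone contributes a gain of at least $\beta p [H(w^{*})^{n-1} - q] - O(\epsilon)$, strictly positive for small $\epsilon$. Since winning probability is weakly increasing in one's own effective value, the shift weakly raises win probability at every other point of the conditional distribution as well, while the price cut only loses $O(\epsilon)$; the deviation is strictly profitable, contradicting equilibrium.

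The main obstacle is the bookkeeping: verifying that the discrete, $\epsilon$-independent gain from breaking the tie at $w^{*}$ dominates the $O(\epsilon)$ loss from the price cut and is not canceled by adverse shifts elsewhere. The key observation is monotonicity of win probability in one's own effective value, which ensures the global upward shift never hurts win probability, so only the price reduction generates a first-order loss.
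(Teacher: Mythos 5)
Your argument is correct in substance but takes a genuinely different route from the paper. The paper disposes of an atom at $\bar{U}$ exactly as you do (it forces $p=0$ and hence zero profit, contradicting the positive equilibrium profit established just before the lemma), but for atoms in $[0,\bar{U})$ it keeps the price fixed and imports Lemma 4.1 of \cite{avp}: an atom in the opponents' distribution makes a firm's demand jump there, so spreading one's own mass at the atom to a neighborhood in a mean-preserving way (an information-side deviation) is strictly profitable; the only added remark is that, with price endogenous, $w=0$ is an interior point of $[-p,\bar{U}-p]$, so the same lemma covers a potential atom at $0$. You instead hold the signal fixed and undercut on price, using the clean observation that $U(F,p-\epsilon)=U(F,p)+\epsilon$ so the entire effective-value distribution translates up by $\epsilon$, converting lost ties at the atom into outright wins at a discrete gain against an $O(\epsilon)$ revenue loss. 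Your approach is self-contained (no reliance on the companion paper) and exploits precisely the instrument that is new in this setting; the paper's approach recycles existing machinery and works even where price is exogenous. Two small points of bookkeeping you should tighten: (i) with a possibly nonatomic mixture over $(F,p)$ pairs, a single support point carries no probability, so phrase the deviation as ``any $(F,p)$ whose conditional distribution has an atom at $w^{*}$ is not a best response, yet such pairs must have positive $\Phi$-measure for the aggregate atom to exist''; and (ii) the strict inequality $q<H(w^{*})^{n-1}$ need not hold for \emph{every} firm under an arbitrary tie-breaking rule, but since the tied firms' winning probabilities sum to one, it must fail for at least one firm, which is enough to contradict the symmetric equilibrium.
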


\begin{proof}
Effective value \(\bar{U}\) is clearly suboptimal when price is a choice
variable as it can be generated only by a zero price coupled with full
disclosure, which yields zero profit. Lemma 4.1 of \cite{avp}
rules out any interior atom over \(\left( 0,\bar{U}\right)\) in the case of
exogenous price by showing that if there is an interior atom, firms would
have a profitable deviation to spread the weight at the atom to its
neighborhood in a mean-preserving manner. The same logic is clearly
applicable here. Note that when price is also a choice variable, the minimum
possible effective value is no longer \(0\). As effective value \(0\) is also an
interior value, the aforementioned lemma is also applicable here. \end{proof}

If other firms adopt strategy \(\left(\Phi ,\left\{ H\left( \cdot ;p\right)
\colon p\in \supp\left( \Phi \right) \right\} \right) \) with an effective value
distribution that is continuous over \(\left[ 0,\bar{U}\right]\), the
(expected) demand function \(D\left( w;\Phi ,H\right)\) in effective value
facing firm \(i\) is also continuous over \(\left[ 0,\bar{U}\right]\) and is
given by 

\[D\left( w;\Phi ,H\right) = \begin{cases}
    0, \quad &\text{if} \quad w < 0\\
    \left[ \int_{0}^{1}H\left( w;p\right) d\Phi \left( p\right) \right] ^{n-1} \quad &\text{if} \quad w \geq 0 \text{ .}
\end{cases}\]
Firm \(i\)'s profit from strategy \(\sigma_i\) is given by 
\[
\int_{0}^{1}p\left[ \int_{\ubar{U}-p}^{\bar{U}-p}D\left(w;\Phi
,H\right) dH_{i}\left( w;p\right) \right] d\Phi _{i}\left( p\right) \text{.}
\]

Our next lemma states that the equilibrium demand function must be well-behaved. Defining \(\bar{d} \equiv \sup \left\{ w\in (0,\bar{U}]:\int_{0}^{1}H\left( w;p\right) d\Phi
\left( p\right) <1\right\}\),
\begin{lemma}
\label{demand}In a symmetric equilibrium \(\left( \Phi ,H\right)\), the
demand function in effective value \(D\left( w;\Phi ,H\right)\) is continuous
over \(\left[ 0,\bar{U}\right]\) and strictly increasing over \(\left[ 0,\bar{d}\right]\). Moreover, \(D\left( 0;\Phi ,H\right) >0\).
\end{lemma}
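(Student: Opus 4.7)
The lemma has three pieces, which I would handle in turn. Continuity on $[0,\bar{U}]$ is immediate from Lemma~\ref{no atom}: since $D(w;\Phi,H) = \bar{H}(w)^{n-1}$ for $w\geq 0$ where $\bar{H}(w) := \int_0^1 H(w;p)\,d\Phi(p)$ is the aggregate effective-value CDF, and $\bar{H}$ inherits atomlessness (hence continuity) on $[0,\bar{U}]$ from Lemma~\ref{no atom}, $D$ is continuous there.

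For strict monotonicity on $[0,\bar{d}]$, I argue by contradiction. Suppose $\bar{H}$ is constant on a nondegenerate interval $[w_1,w_2] \subseteq [0,\bar{d}]$. Taking $w_2$ maximal with $\bar{H}(w_2) = \bar{H}(w_1)$, the definition of $\bar{d}$ combined with continuity of $\bar{H}$ forces $\bar{H}(w_2 + \eta) > \bar{H}(w_2)$ for every $\eta > 0$. By symmetry, each firm places positive mass in every right-neighborhood of $w_2$ but no mass on $(w_1,w_2)$. Fix a price $p \in \supp(\Phi)$ at which the firm places mass $m > 0$ in $(w_2,w_2+\delta]$, and consider the deviation that replaces this mass by a mean-preserving two-point lottery at $\{w_1, w_2+\delta'\}$ for some $\delta' > \delta$, with weights $\alpha$ and $1-\alpha$ pinned down by mean preservation. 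Feasibility of the modified effective-value distribution (provided $w_2+\delta' \leq \bar{U}-p$) is guaranteed by the characterization via Lemmas~3.4--3.5 of \cite{avp}. As $\delta \to 0$, continuity and flatness of $D$ on $[w_1,w_2]$ drive the expected-demand loss from removing the original mass toward $m \cdot D(w_1)$, while the replacement contributes $m[\alpha D(w_1) + (1-\alpha) D(w_2+\delta')]$ with $\alpha \in (0,1)$ and $D(w_2+\delta') > D(w_1)$ strict by maximality of $w_2$. Hence the deviation strictly raises expected demand for $\delta$ small, contradicting equilibrium. This is dual in spirit to the atom-splitting argument of Lemma~4.1 of \cite{avp} invoked in Lemma~\ref{no atom}.

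Finally, $D(0) > 0$ is equivalent to $\bar{H}(0) > 0$. Suppose instead $\bar{H}(0) = 0$: then each firm's effective value is almost-surely strictly positive, and atomlessness together with the mean identity $\mathbb{E}[W \mid p] = \mu - c - p$ forces every $p \in \supp(\Phi)$ to satisfy $p < \ubar{U}$ strictly (an atomless distribution on $[-p,\bar{U}-p]$ with mean $\leq 0$ must place positive mass below $0$). I would then construct a deviation at some price $p' > \ubar{U}$ paired with the fully revealing signal, which yields effective value $\bar{U}-p'$ with probability $\mu$ and $-p'$ with probability $1-\mu$, and therefore deviation profit $p'\mu D(\bar{U}-p')$. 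By the strict monotonicity just established, this is strictly positive for $p'$ chosen just above $\ubar{U}$ so that $\bar{U}-p' \in (0,\bar{d}]$. The delicate step---and the main obstacle---is showing this deviation strictly exceeds the equilibrium profit $\pi$; I would do so by bounding $\pi$ above using the cap $p < \ubar{U}$ on equilibrium prices together with the shape of $D$ on $[0,\bar{d}]$ inherited from the previous parts, so that the deviation dominates every equilibrium price in the support.
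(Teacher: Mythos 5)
Your continuity step coincides with the paper's. Your monotonicity argument is also in the paper's spirit (the paper notes that at the right endpoint \(b\) of a maximal flat segment the demand is locally convex, so no firm places weight there, which forces the flat segment to extend past \(b\) and contradicts maximality), but your explicit two-point spread has a feasibility problem that your parenthetical does not address: effective-value distributions are not arbitrary mean-constrained distributions on \(\left[-p,\bar{U}-p\right]\). The top of the support of \(H\left(\cdot;p\right)\) is the reservation value and must carry an atom of size \(c/\left(1-\left(w+p\right)\right)\) dictated by the reservation-value equation. In a symmetric equilibrium the mass you locate in \(\left(w_2,w_2+\delta\right]\) at a fixed price \(p\) may consist precisely of that firm's reservation-value atom; pushing part of it up to \(w_2+\delta'\) then raises the reservation value and changes the mandatory atom size, so the modified distribution need not be implementable by any signal at price \(p\). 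Lemmas 3.4--3.5 of \cite{avp} are exactly the constraints that bite here, not a license to ignore them; the repair requires pairing the spread with a price adjustment, as in the proof of Lemma \ref{opt info}.

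The genuine gap is the claim \(D\left(0;\Phi,H\right)>0\). Your deviation to \(p'>\ubar{U}\) with full information indeed earns a strictly positive profit, but that is not a contradiction: the equilibrium profit is itself strictly positive (a firm can always secure positive profit by pricing at \(\frac{\bar{U}-\ubar{U}}{2}\) with full information, as noted at the start of \(\S\)\ref{postedinfosection}), so you must show the deviation profit strictly \emph{exceeds} the equilibrium profit, and you explicitly leave that comparison open. It is the entire content of the claim, not a routine bound. The paper closes this by a local-convexity argument instead: it first shows that \(\inf\left\{w:D\left(w;\Phi,H\right)>0\right\}=0\) (otherwise \(D\) lies strictly below its concavification near that infimum, so no firm places weight there, contradicting the definition of the infimum), and then observes that if \(D\left(0;\Phi,H\right)=0\), the same local convexity at \(0\) (using \(D\left(-p\right)=0\) and \(D\left(w\right)>0\) for \(w>0\)) makes it suboptimal for any firm to place effective-value weight in a neighborhood of \(0\), which keeps \(D\) at zero on a right-neighborhood of \(0\) and contradicts the first step. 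You should either adopt that argument or actually carry out the profit comparison you defer.
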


\begin{proof}
Please see Appendix \ref{demandlemmaproof}.
\end{proof}

In addition to the continuity and strict monotonicity, the equilibrium
demand is positive at effective value \(0\). This arises when there is a
positive probability that all other firms realize posteriors below the
prices they charge.

The next lemma establishes that when the demand function has these
properties, an optimal signal takes a simple form: either it provides full
information or it provides no information.

\begin{lemma}\label{full info is opt}
\label{opt info} Consider a demand function in effective values satisfying the properties in Lemma \ref{demand}. The information policy in the optimal
strategy can only be either full information or no information. Moreover, there is always an optimal strategy involving full information. No information can emerge in the optimal strategy only if used in conjunction with price \(\ubar{U}\).
\end{lemma}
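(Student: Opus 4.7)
My plan is to recast the firm's problem as a Bayesian persuasion problem in the posterior $x$ with an endogenous reservation-value constraint, and then use Lagrangian analysis to identify the extremal structure of optimal signals.

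I would first fix a candidate price $p$ and a target reservation value $U \in [\ubar{U} - p,\, \bar{U} - p]$. The firm chooses a Bayes-plausible $F$ on $[0, 1]$ with mean $\mu$ satisfying the reservation-value equation $c = \int_{[U+p,\,1]} (x - p - U)\, dF(x)$, to maximize $\int_0^1 v_U(x)\, dF(x)$, where $v_U(x) = p\, D(\min\{x - p,\, U\})$ equals $0$ on $[0, p)$, $p\, D(x - p)$ on $[p,\, U+p)$, and $p\, D(U)$ on $[U+p,\, 1]$. Attaching Lagrange multipliers $\lambda_1, \lambda_2$ to the mean and reservation-value constraints, an optimal $F$ concentrates on the pointwise maximizers of
\[
\phi(x) = v_U(x) - \lambda_1 x - \lambda_2 (x - p - U)\,\mathbf{1}[x \ge U + p].
\]

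The function $\phi$ is affine on $[0, p)$ and on $[U+p,\, 1]$, and equals $p\, D(x - p)$ minus an affine term on $[p,\, U+p)$. Using that $D$ is continuous on $[0, \bar{U}]$, strictly increasing on $[0, \bar{d}]$, and $D(0) > 0$, I would argue that for multipliers making the two constraints bind, the pointwise maximizers of $\phi$ collapse to $\{0, 1\}$ (yielding full information when $U = \bar{U} - p$) or to the singleton $\{\mu\}$ (yielding no information when $U = \ubar{U} - p$). The intuition is that posteriors interior to $[p,\, U+p)$ trade off match-value revelation against the fixed demand factor $D(U)$ in a way that never strictly beats placing mass at the boundary of $[0, 1]$, subject to the two scalar constraints.

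Substituting these two candidate signals gives $\pi_{\mathrm{full}}(p) = \mu\, p\, D(\bar{U} - p)$ and $\pi_{\mathrm{no}}(p) = p\, D(\ubar{U} - p)$ (the latter zero for $p > \ubar{U}$). Using the identity $\mu \bar{U} = \ubar{U}$, one finds $\pi_{\mathrm{full}}(\bar{U}) = \pi_{\mathrm{no}}(\ubar{U}) = \ubar{U}\, D(0)$, so the two candidate profit functions meet at a common value at the canonical prices $p = \bar{U}$ and $p = \ubar{U}$. A direct comparison of their maxima over $p$ — using the monotonicity of $D$ — then shows $\max_p \pi_{\mathrm{full}}(p)$ is always attained and weakly dominates $\max_p \pi_{\mathrm{no}}(p)$, and that no information can appear in an optimal strategy only at $p = \ubar{U}$.

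The hardest step is the extremal-structure argument. Because $v_U$ has a jump at $x = p$ and is flat on $[U+p,\, 1]$, and because the second Lagrange multiplier adds an affine term on $[U+p,\, 1]$, several regimes of $(\lambda_1, \lambda_2)$ must be handled separately, and in each one must rule out interior maximizers of $\phi$ on $[p,\, U+p)$ by exploiting the monotonicity and jump behavior of $D$, and then verify consistency with the mean and reservation-value constraints.
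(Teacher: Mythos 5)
There is a genuine gap at exactly the step you flag as the hardest one. At a fixed price $p$ and fixed reservation value $U$, the pointwise maximizers of your Lagrangian $\phi$ do \emph{not} collapse to $\{0,1\}$ or to $\{\mu\}$ under the hypotheses of Lemma \ref{demand}. Those hypotheses give only continuity on $[0,\bar{U}]$, strict monotonicity on $[0,\bar{d}]$, and $D(0)>0$; they say nothing about curvature. If, for instance, $D$ is strictly concave and increasing on $[0,\bar{d}]$, then on $[p,U+p)$ the function $\phi(x)=pD(x-p)-\lambda_1 x$ attains its maximum at an interior posterior, and the solution of the fixed-$(p,U)$ persuasion problem genuinely places mass there: it is in general \emph{ternary} (mass at posterior $0$, at an interior posterior determined by the tangent from $(-p,0)$ to the concave graph of $pD$, and on $[U+p,1]$ to fund the reservation-value constraint). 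This is precisely the starting point of the paper's proof, which cites the ternary-support characterization of \cite{avp} for the fixed-price problem. Note also that for intermediate $U$ neither of your two candidates is even feasible (full information pins the reservation value at $\bar{U}-p$ and no information at $\ubar{U}-p$), and your argument never rules out an intermediate $U$ being globally optimal.

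The missing idea is the joint price-and-information deviation, which cannot be seen from inside a fixed-$(p,U)$ Lagrangian. The paper eliminates the interior support point by comparing the ternary optimum against (i) a binary effective-value distribution supported on $\{-p^{\ast},w\}$ at the same price and (ii) a \emph{degenerate} (no-information) distribution at a different price $\mu-c-w_0$, using the non-collinearity of the three support points on the graph of $D$ together with the identity that a binary distribution on $\{-p,w\}$ earns $p\cdot\frac{\mu-c}{w+p}\,D(w)$; it then pins down full information by observing that this expression is strictly increasing in $p$, so the price should be raised to $\bar{U}-w$, which forces the support out to $\{-p,\bar{U}-p\}$. Your closing comparison of $\pi_{\mathrm{full}}$ and $\pi_{\mathrm{no}}$ is correct (it is essentially the computation in the proof of Lemma \ref{no info dominated}), but it only adjudicates between the two corner candidates; it does not establish that these are the only candidates, which is the substance of the lemma.
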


\begin{proof}
Please refer to Appendix \ref{optinfoproof}.
\end{proof}

The intuition of the lemma is as follows. Fix a price temporarily and we are
back to the setting of \cite{avp}. There, we show that fixing
the reservation value, the optimal effective-value distribution assigns a
positive mass to that reservation value (fixed persuasion budget) and the
residual mass to posteriors below according to the concavification of the
payoff function in effective value (flexible persuasion budget). As such, an
optimal effective-value distribution can be constructed with at most a
ternary support. Moreover, restricted to this ternary support of the optimal
distribution, the demand is concave. The concavity implies that the firm can
increase its expected demand if it can further contract the support to
binary values in a mean-preserving way. 

In the proof of Lemma \ref{opt info}, we detail
how this can be achieved by a reduction in the reservation value accompanied
by a price adjustment. We further exploit the concavification of the demand
to show that the optimal profit can be attained by effective value
distribution supported on \(\left\{ -p,w\right\}\), where \(p\) is the charged
price and \(w\) is the reservation value. If the demand at \(w\) is \(D\left(
w\right)\), the profit of this class of strategy is simply \(p\times \left[ 
\ubar{U}/\left( w+p\right) \right] \times D\left( w\right)\), which is
strictly increasing in \(p\). In order to maintain the reservation
value at \(w\), more information needs to be provided to make up for the price
hike. This implies the optimality of full information, in conjunction with
the highest price that is consistent with \(w\) being the reservation value,
i.e., \(\bar{U}-w\). No information used in conjunction with price \(\ubar{U}\) can arise as an optimal solution only when the demand function happens to have full information and price \(\ubar{U}\) as an optimal solution, as the two strategies yield identical profit.

An immediate consequence of Lemmas \ref{demand} and \ref{opt info}
is that \textit{the only possible information policies in a symmetric equilibrium are full information and no information}. Recall that Lemma \ref{no info
dominated} shows that no information is strictly dominated except for the
case of pricing at \(\ubar{U}\). However, we cannot have every firm
assigning a positive probability to price \(\ubar{U}\) and no information in equilibrium, as it implies an atom at \(0\) in the equilibrium effective value distribution, contradicting Lemma \ref{no atom}. As a result, in a symmetric equilibrium, full information must be chosen almost surely.

In the remainder of this section, we explicitly construct the unique
symmetric equilibrium. It remains to identify the price distribution that is
a mutual best response for firms under full information.

For a firm adopting full information and price distribution \(\Phi\) with
support contained in \(\left[ 0,\bar{U}\right]\),\footnote{It is immediate that prices above \(\bar{U}\) yield zero profit for sure and are never chosen. } its implied effective value distribution is 
\[H\left( w\right) =
\begin{cases}
    \left( 1-\mu \right) \left[ 1-\Phi \left( \bar{U}-w\right) \right] \quad  &\text{if} \quad w<0\\ 
    \left( 1-\mu \right) +\mu \left[ 1-\Phi \left( \bar{U}-w\right) \right]  \quad &\text{if} \quad w\geq 0\text{.}
\end{cases}
\]
If all but firm \(i\) adopts this strategy, the demand facing firm \(i\) is
given by
\[D\left( w\right) =
\begin{cases}
    0 \quad  &\text{if} \quad w<0\\ 
    \left[ 1-\mu \Phi \left( \bar{U}-w\right) \right] ^{n-1}  \quad &\text{if} \quad w\geq 0\text{.}
\end{cases}
\]
Therefore, by pricing at \(p\in \left[ 0,\bar{U}\right]\) and full
information, firm \(i\)'s profit is \(p\mu D\left( \bar{U}-p\right)\).

Following standard arguments, if \(\Phi\) is an equilibrium price distribution, then all prices on its support must yield the same profit to
an individual firm. Moreover, the maximum of its support must be \(\bar{U}\).
These observations allow us to pin down \(\Phi\) at%
\[\Phi \left( p\right) =\frac{1}{\mu}-\frac{1-\mu }{\mu }\left( \frac{\bar{U}}{p}%
\right) ^{\frac{1}{n-1}}, \quad \text{with support} \quad \left[\left( 1-\mu \right) ^{n-1}\bar{U},\bar{U}\right] \text{.}  \label{price distn}\tag{\(3\)}\]

It is straightforward to verify that the demand function implied by (\ref%
{price distn}) indeed satisfies all the properties in Lemma \ref{demand}. As
such, full information is indeed the optimal information policy for each
individual firm.

\begin{proposition}\label{main prop}
When both information and prices are posted, there is a unique symmetric equilibrium. In it, firms provide full information and randomize
over prices according to (\ref{price distn}).
\end{proposition}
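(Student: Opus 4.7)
The plan is to combine the preceding lemmas to conclude that every firm uses full information almost surely in any symmetric equilibrium, pin down the price distribution uniquely by an equal-profit argument, and then verify that the resulting candidate is indeed an equilibrium by checking that the induced demand satisfies the hypotheses of Lemma \ref{full info is opt}.

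First, I would argue that in any symmetric equilibrium $(\Phi,H)$, full information is chosen almost surely. Lemma \ref{demand} guarantees that the induced demand is continuous and strictly increasing on $[0,\bar{d}]$ with $D(0) > 0$, so by Lemma \ref{full info is opt} each price on $\supp\Phi$ must be paired with either full information or no information, the latter only in conjunction with price $\ubar{U}$. But (no information, $\ubar{U}$) produces effective value $0$ deterministically, so any positive mass placed on that pair would create an atom at $w=0$ in the equilibrium effective-value distribution, contradicting Lemma \ref{no atom}. Hence full information is used with probability one.

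Second, conditional on universal full information, the effective value equals $-p$ with probability $1-\mu$ and $\bar{U}-p$ with probability $\mu$, which pins down the distributions $H$ and $D$ displayed in the excerpt. Deviating to $p$ with full information yields profit $\pi(p) = p\,\mu\,[1-\mu\Phi(p)]^{n-1}$. A standard equilibrium argument now determines $\Phi$: $\pi(\cdot)$ must be constant on $\supp\Phi$; the upper endpoint must equal $\bar{U}$ (otherwise a slight upward push retains essentially all demand and strictly improves payoff); and setting $\pi(p) = \pi(\bar{U}) = \bar{U}\mu(1-\mu)^{n-1}$ for all $p \in \supp\Phi$ and solving for $\Phi(p)$ yields exactly (\ref{price distn}), with $\Phi(\ubar{p}) = 0$ forcing $\ubar{p} = (1-\mu)^{n-1}\bar{U}$.

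Finally, I would verify existence by checking that this $\Phi$ is atomless and strictly increasing on $[(1-\mu)^{n-1}\bar{U},\bar{U}]$, so the induced $H$ is atomless on $[0,\bar{U}]$ (vindicating Lemma \ref{no atom}) and the induced demand $D(w) = [1-\mu\Phi(\bar{U}-w)]^{n-1}$ is continuous, strictly increasing on $[0,\bar{U}]$, with $D(0) = (1-\mu)^{n-1} > 0$---precisely the properties required by Lemma \ref{demand}. Lemma \ref{full info is opt} then delivers that full information at any $p \in \supp\Phi$ is a best response, and by construction $\pi$ is constant on $\supp\Phi$ while prices outside this support do strictly worse. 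The only delicate step is the first: Lemma \ref{full info is opt} alone does not preclude the pair (no information, $\ubar{U}$) as an individual best response, and ruling it out in a symmetric equilibrium is where the atom-at-zero observation, leveraging Lemma \ref{no atom}, does real work. The remaining steps are routine Bertrand-style computations.
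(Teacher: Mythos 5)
Your proposal is correct and follows essentially the same route as the paper: it invokes Lemmas \ref{no atom}, \ref{demand}, and \ref{full info is opt} to force full information almost surely (ruling out the (no information, $\ubar{U}$) pair via the atom at $w=0$), pins down $\Phi$ by the constant-profit condition with upper support point $\bar{U}$, and closes by checking that the induced demand satisfies the hypotheses of Lemma \ref{demand} so that Lemma \ref{full info is opt} certifies optimality of full information against any price deviation. The paper additionally offers a standalone concavification-based sufficiency check, but your argument matches its primary proof.
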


\subsection{Alternate Sufficiency Argument for Proposition \ref{main prop}}

It is possible to prove that the unique symmetric equilibrium specified in Proposition \ref{main prop} is an equilibrium through a simple direct argument. Indeed, we make the \textit{ansatz} that there is an equilibrium in which firms provide full information, then solve for the pricing-only game, which generates the equilibrium distribution specified in (\ref{price distn}). From there, it remains to check double-deviations to alternate price-distribution combinations.

It suffices to fix an arbitrary price, then solve for optimal distribution over posterior effective values. In principle, this is quite difficult: not all Bayes-plausible distributions over effective values are feasible (as explained in \cite{avp}). However, we can solve for the optimal distribution by ignoring the feasibility constraint, solving for the (possibly infeasible) optimal effective value distribution via straightforward concavification (\cite{kam}), then checking to see whether the resulting distribution is, indeed, feasible.

There are two possible regions of prices to check. The first is when the price is in the interval \(\left[\left( 1-\mu \right) ^{n-1}\bar{U},\bar{U}\right]\). In that case, the firm's payoff as a function of the realized effective value \(w\) is
\[\Pi\left(w\right) = \begin{cases}
0, \quad &-p \leq w < 0\\
p\left(1-\mu\right)^{n-1}\left(\frac{\mu-c}{\mu-c-\mu w}\right), \quad &0 \leq w \leq \bar{U} - p\text{ .}
\end{cases}\]

It is easy to see that the line \[t(w) = \left(1-\mu\right)^{n-1}w + \left(1-\mu\right)^{n-1}p \text{ ,}\]
lies weakly above \(\Pi\) on \(\left[-p,\bar{U}-p\right]\) and intersects \(\Pi\) at the three points \(\left\{-p,0,\bar{U} - p\right\}\). The full information distribution is supported on effective values \(-p\) and \(\bar{U} - p\) and is feasible, so it is optimal. Figure \ref{figproofill} illustrates a potential deviator's payoff as a function of the realized effective value for the (boundary) price of \(p = \left( 1-\mu \right) ^{n-1}\bar{U}\) (when \(n = 2\)).

The other case is when \(p < \left( 1-\mu \right) ^{n-1}\bar{U}\). In that instance, we can proceed in a similar manner and verify that the firm's payoff is bounded above by the equilibrium payoff of \(\bar{U} \mu \left(1-\mu\right)^{n-1}\).

\begin{figure}
    \centering
    \includegraphics[scale=.1]{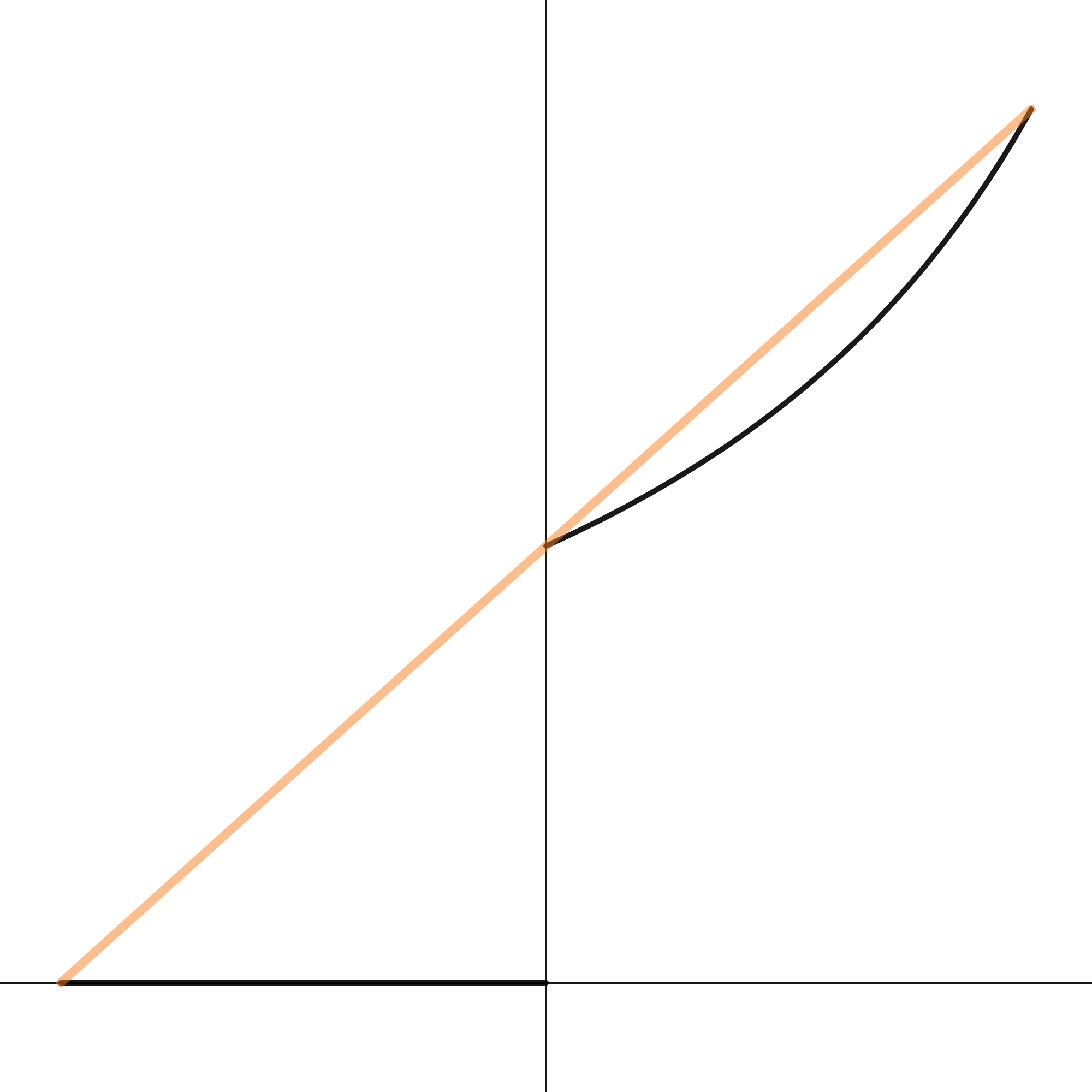}
    \caption{Payoff as a function of the realized effective value (Black) and its Concavification (Orange) for a fixed price.}
    \label{figproofill}
\end{figure}

\subsection{To Post or Not to Post}
So far in this section, we restrict attention to the setting in which firms are "exogenously required" to post their information policies. Suppose instead firms are free to choose whether to post their information policies or not. Is it in their interests to post and commit to their information policies at the outset?

In this alternative setting, a typical strategy of a firm includes not only the posted price and signal, but also whether the signal is posted or not. Focusing again on symmetric equilibrium, the following corollary states that the unique symmetric equilibrium has all firms posting their signals and hence the outcome effectively coincides with that of Proposition \ref{main prop}.

\begin{corollary}
Suppose firms can choose whether to post their signals or not. In the unique symmetric equilibrium, every firm posts full information and randomizes
over prices according to (\ref{price distn}).
\end{corollary}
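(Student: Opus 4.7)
The plan is to verify that the candidate strategy---all firms post full information and randomize over prices according to (\ref{price distn})---is an equilibrium in the extended game, and then to establish its uniqueness.

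For existence, deviations within posting are ruled out by Proposition \ref{main prop}, so only deviations to not posting remain to be checked. I would specify the consumer's off-path belief in line with the Hidden-Information analysis of \(\S\)\ref{hiddeninfosection}: upon seeing a firm not post, the consumer conjectures that it provides useless information. Under this belief, any deviation (not post, \(p\), \(F\)) is payoff-equivalent to posting useless information at price \(p\)---the conjectured reservation value is \(\ubar{U}-p\), and by the Lemma \ref{onlyonelemma} argument the firm's IC best-response actual \(F\) is itself useless information. Lemma \ref{no info dominated} then weakly dominates this by some posted strategy, strictly unless \(p=\ubar{U}\). At \(p=\ubar{U}\), using the candidate's demand \(D(w)=(1-\mu)^{n-1}\bar{U}/(\bar{U}-w)\) gives profit \(\ubar{U}\cdot D(0)=\bar{U}\mu(1-\mu)^{n-1}=\pi^{*}\), the equilibrium profit. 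Hence no profitable deviation exists.

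For uniqueness, I would invoke Lemma \ref{no atom}. Suppose in a symmetric equilibrium some firm assigns positive probability to not posting. Bayesian consistency plus the Lemma \ref{onlyonelemma} IC argument forces the firm's actual signal (given not posting) to be useless information, so its effective value is degenerate at \(\ubar{U}-p\). A mass point at (not post, \(p\), useless info) then creates an atom at \(\ubar{U}-p\) in the aggregate effective-value distribution; positive equilibrium profit requires \(p\le\ubar{U}\) (else the reservation value is negative and the firm is never visited), so the atom lies in \([0,\bar{U}]\), contradicting Lemma \ref{no atom}. To exclude continuous mixing over not-posting prices, I would combine the two indifference conditions \(p\cdot D(\ubar{U}-p)=\pi^{*}\) (not-posting) and \(p\mu\cdot D(\bar{U}-p)=\pi^{*}\) (posting). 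Using \(\ubar{U}=\mu\bar{U}\), these jointly force \(w(1-\mu)=0\) on any overlapping effective-value range, so they agree only at \(w=0\); since both supports begin at zero, continuous mixing would require agreement on an entire interval containing \(0\), which is impossible. Hence every firm posts with probability one, and Proposition \ref{main prop} delivers the unique equilibrium.

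The principal obstacle is the specification of the consumer's off-path beliefs for a deviating non-posting firm and the exclusion of partial-posting mixed equilibria; the former is handled via the Hidden-Information IC belief (so that not posting collapses to posting useless information), and the latter via Lemma \ref{no atom} combined with the rigidity of the two indifference conditions.
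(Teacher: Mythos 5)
Your overall architecture matches the paper's: off the equilibrium path, a non-posting firm is believed (via the Lemma \ref{onlyonelemma} logic) to provide useless information, so refusing to post collapses to posting useless information; the existence check at \(p=\ubar{U}\) (profit \(\ubar{U}\cdot(1-\mu)^{n-1}=\pi^{*}\)) is exactly right; and ruling out a mass point on any single non-posting price via the atom it would create in the effective-value distribution is the same use of Lemma \ref{no atom} the paper makes. The one place you diverge is in excluding atomless mixing over non-posting prices. The paper does not need a separate argument here: Lemma \ref{full info is opt} already says that, against any demand satisfying Lemma \ref{demand}, no information can be optimal \emph{only} in conjunction with the single price \(\ubar{U}\), so any positive probability on non-posting must concentrate there and necessarily produces an atom at effective value \(0\). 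Invoking that lemma short-circuits your entire second case.

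Your replacement argument for that case is salvageable but under-justified as written. The computation \((\ubar{U}-w)D(w)=\mu(\bar{U}-w)D(w)\Rightarrow w(1-\mu)=0\) is correct, but the assertion that ``both supports begin at zero'' is not established, and agreement of the two indifference conditions is only forced on the \emph{overlap} of the two effective-value ranges, which a priori could be empty or a single interior point (e.g., non-posting prices generating effective values in \([a,b]\subset(0,\bar d)\) and posted full-information prices covering the rest). To close this, you should argue that Lemma \ref{demand} makes the aggregate support above zero a connected interval \([0,\bar d]\) with \(D\) continuous and strictly positive on \((0,\bar d]\), so at any boundary point \(a\) between the posting and non-posting regions both indifference conditions hold by continuity, forcing \(a=0\); hence the non-posting region degenerates to \(\{0\}\), i.e., the single price \(\ubar{U}\), and you are back to the atom contradiction. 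With that repair (or by simply citing Lemma \ref{full info is opt} as the paper does), your proof is complete.
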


By the argument in Lemma \ref{onlyonelemma}, any firm that charges a positive price and chooses to hide the information policy is believed to provide useless information. Consequently, any firm that refuses to post its signal has effectively chosen to provide useless information. By Lemma \ref{full info is opt}, this is optimal for a firm only if price \(\ubar{U}\) is charged. However, the symmetric equilibrium cannot have firms assigning a strictly positive probability at price \(\ubar{U}\) as it implies an atom in the equilibrium effective value distribution. Consequently, by Lemma \ref{full info is opt} again, the symmetric equilibrium must have every firm posting full information almost surely.

\section{Comparative Statics}

\(\S\)\ref{hiddeninfosection} reveals that when information is hidden, each firm obtains zero profit. The consumer's welfare is simply \(\mu-c\). As noted in \(\S\)\ref{postedinfosection}, when information is posted, each firm obtains a
profit of 
\[
\bar{U}\times \mu \times \underbrace{\left( 1-\mu \right) ^{n-1}}_{\text{Demand at effective value } 0} =\left( \mu -c\right) \left( 1-\mu
\right) ^{n-1}\text{ .}
\]%
The consumer's welfare is given by the expected maximum effective value. By
straightforward computation, it is given by%
\[
u^{PI}\left( n\right) \equiv \frac{\mu -c}{\mu }\left( 1-n\mu \left( 1-\mu
\right) ^{n-1}-\left( 1-\mu \right) ^{n}\right) \text{ .}
\]

It is clear that the firms always strictly prefer the posted information setting (a positive number is bigger than \(0\)). Information posting allows firms to overcome the "informational Diamond paradox," thus activating product differentiation in the market. With product differentiation, a firm can secure a positive expected profit, as there is always a chance that its product is the only one that can match the consumer's preference.

The consumer's welfare is, however, less clear-cut. In the posted
information setting, the consumer has access to full information on all
products, allowing her to find a product that matches well with her
preference. However, the product differentiation produces milder price
competition, so she has to pay a higher price. Whether the benefit of better
information can outweigh the cost of higher  prices depends on the number of firms in the market. When there are few firms in the market, the
benefit of full information is limited (as she is not likely to find a
well-matched product), whereas the cost of higher prices is significant (as
price competition is mild). In this case, the consumer's welfare is hurt by
information posting. However, when there are many firms in the market, the
benefit of full information is huge (as she is likely to find a well-matched
product), whereas the cost of higher price is limited (as price competition
is intense). In this case, the consumer can strictly gain from information
posting.

\begin{proposition}
\label{comp stat} Consumers benefit from information posting if and only if the number of firms is sufficiently large.
\end{proposition}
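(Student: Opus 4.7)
The plan is to reduce the claim to a single-variable monotonicity statement and then pin down the crossing point by boundary behavior. Writing $q \equiv 1-\mu$, consumer welfare under posted information is $u^{PI}(n) = \tfrac{\mu-c}{\mu}\,g(n)$ with $g(n) \equiv 1 - n\mu q^{n-1} - q^n$, whereas under hidden information it is the constant $\mu - c$. The desired inequality $u^{PI}(n) > \mu - c$ is therefore equivalent to $g(n) > \mu$. Because both sides of this reduced inequality are independent of the search cost $c$, the comparative static depends only on $n$ (with $\mu$ fixed), which is exactly what Proposition \ref{comp stat} asserts.

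The central step is to show that $g$ is strictly increasing in $n$. I will compute the first difference directly:
\[
g(n+1) - g(n) = \mu q^{n-1}\bigl[q + n - (n+1)q\bigr] = n\mu^2 q^{n-1} > 0,
\]
so that $u^{PI}$ is itself strictly increasing in $n$. I expect this algebraic identity to be the only nontrivial computation; it is the engine of the proof because it guarantees that the welfare ranking can cross only once as $n$ grows.

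With monotonicity in hand, the conclusion follows from two boundary checks. At $n=2$ a direct expansion yields $g(2) = \mu^2$, hence $u^{PI}(2) = \mu(\mu-c) < \mu - c$, so hidden information strictly dominates in a duopoly. Since $q \in (0,1)$, we have $nq^{n-1} \to 0$ and $q^n \to 0$, so $g(n) \to 1$ and $u^{PI}(n) \to \tfrac{\mu-c}{\mu} > \mu - c$. Combining the strict monotonicity of $u^{PI}$ with these two facts, there exists a unique threshold $n^\ast \geq 3$ such that $u^{PI}(n) > \mu - c$ if and only if $n \geq n^\ast$, which is precisely the proposition. There is no real obstacle beyond the first-difference computation; the rest is monotonicity plus a limit argument.
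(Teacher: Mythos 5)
Your proposal is correct and follows the same overall skeleton as the paper's proof: establish that $u^{PI}(n)$ is strictly increasing in $n$, then observe that $u^{PI}(2)=\mu(\mu-c)<\mu-c$ while $\lim_{n\to\infty}u^{PI}(n)=\tfrac{\mu-c}{\mu}>\mu-c$. The only substantive difference is in how monotonicity is established. The paper differentiates $u^{PI}$ with respect to a continuous $n$, obtaining $-\tfrac{\mu-c}{\mu}(1-\mu)^{n-1}\left[\mu+(1-\mu+n\mu)\ln(1-\mu)\right]$, and then must argue that the bracketed term is negative by monotonicity in $n$ and a separate calculus fact about $\mu+(1+\mu)\ln(1-\mu)$. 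Your discrete first-difference identity $g(n+1)-g(n)=n\mu^2 q^{n-1}>0$ (which I have verified) reaches the same conclusion more directly: it is an exact algebraic telescoping that avoids logarithms entirely and is arguably better suited to the problem, since $n$ is an integer. Both arguments are valid; yours is the more elementary, while the paper's continuous version additionally shows the welfare expression is increasing as a function of real $n$, which is slightly stronger but not needed for the proposition.
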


\begin{proof}
See Appendix \ref{compstatproof}.
\end{proof}

\section{Discussion}

We show in this paper that consumers may be better off if firms can only guide search through their prices and not through the information they promise to provide. With hidden information, firms provide no information and there is no search. Nevertheless, the corresponding lack of product differentiation engenders fierce price competition. When the market is thin, this is to consumers' benefits, even though the probability of consumers obtaining a high match value is the lowest possible.

Going forward, a natural question is whether these results can be generalized to other match-value distributions, in particular the case in which the distribution possesses a density. With hidden information, it is straightforward to see that our result persists: firms provide no useful information and price at marginal cost (\(0\)).\footnote{None of the arguments for the results of \(\S\)\ref{hiddeninfosection} rely on the binary distribution over match values. The subsumed paper, \cite{dia}, contains more details and discussion of this.} Things are much more difficult with posted information. In particular, even characterizing the equilibrium distribution over prices for a particular conjectured match value distribution is a challenge, since firms necessarily mix.\footnote{By imposing consumer heterogeneity, \cite*{choi2} negate this issue with aplomb.} From there, one must then check whether there are no double deviations by firms.

\bibliography{sample.bib}

\appendix

\section{Omitted Proofs}

\subsection{Lemma \ref{onlyonelemma} Proof}

\begin{proof}
    It is obvious that there can be no equilibria in which the consumer does not search with a probability greater than \(0\). A distribution and price in support of a firm's mixed strategy that is followed by no search with strictly positive probability necessarily induce a reservation value that is strictly less than \(0\). Consequently, a firm could always deviate profitably by instead charging a very low but positive price and providing no information.

    It suffices to show that in any symmetric equilibrium, firms provide useless information with probability \(1\), as by the definition of useless information, the consumer necessarily stops her search and buys from any visited firm.
    
    Suppose for the sake of contradiction that there exists a symmetric equilibrium in which useful information is provided with a strictly positive probability. By the definition of useful information, the consumer searches more than one firm with a positive probability.

    There are two possibilities, either firms make strictly positive profits or they make zero profits.

    \medskip
    
    \noindent \textbf{Case 1. Strictly positive profits.} By assumption, for each firm \(i\), there exists a price, conjectured distribution pair \(\left(p_i, \tilde{F}_i\right)\) in support of its mixed strategy, with \(p_i > 0\), such that if firm \(i\) chooses \(\left(p_i, \tilde{F}_i\right)\), 
\begin{enumerate}[noitemsep,topsep=0pt]
    \item Firm \(i\) is visited with strictly positive probability;
    \item Firm \(i\)'s induced reservation value satisfies \(\tilde{U}_i > \ubar{U} - p_i\) (as the information contained in \(tilde{F}_i\) is useful) and \(\tilde{U}_i \geq 0\) (as firm \(i\) is visited); and
    \item \(\tilde{F}_i\) places strictly positive probability on posteriors \(x\) at which the consumer purchases from firm \(i\) with probability that is strictly less than \(1\), conditional on visiting firm \(i\) (again as the information is useful). Denote this set of posteriors \(\tilde{X}\), and note that \(\tilde{X} \subseteq \left[0,\tilde{U}_i + p_i\right)\).
\end{enumerate}
By the reservation value formula, \(\tilde{F}_i\left(\tilde{U} + p_i\right) < 1\), and by assumption \(\tilde{F}_i\left(\sup \tilde{X}\right) > 0\). Immediately we see that firm \(i\) can strictly improve her payoff by pooling the beliefs lying strictly above \(\tilde{U}_i + p_i\) with a subset of \(\tilde{X}\) on which \(\tilde{F}\) places strictly positive probability. In principle, we could try to ``punish'' firm \(i\) by altering tie-breaks at beliefs \(x < \tilde{U}_i + p_i\) in the other firms' favor. However, this is inconsequential: firm \(i\) can preserve any favorable tie-breaks at beliefs \(x < \tilde{U}_i + p_i\) by increasing any such belief by \(\varepsilon > 0\) (small) by averaging them with a collection of beliefs strictly above \(\tilde{U}_i + p_i\) on which \(\tilde{F}_i\) places vanishingly small probability.

Consequently, a symmetric equilibrium necessarily has \(\tilde{F}_i\left(\sup \tilde{X}\right) = 0\), and the posterior realizations have no effect on the consumer's purchase decision. The information contained in the signal is thus necessarily useless. 

 \medskip
    
    \noindent \textbf{Case 2. Zero profits.} Consider next symmetric equilibrium in which firms earn zero profit. This requires all firms pricing at \(0)\) with probability \(1\). In order that any deviation to a positive price is unprofitable, it is necessary that for all \(p_i > 0\), firm \(i\) is visited with probability \(0\), for otherwise it could secure a positive expected demand and hence a positive profit (say by full information). As the consumer's belief about the reservation value of firm \(i\) is at least \(\ubar{U} - p_i\), firm \(i\) is never visited only if the effective value realization of firms other than firm \(i\) is no less than \(\ubar{U} - p_i\) with probability \(1\). As \(p_i\) can be vanishingly small, this implies the effective value realization of firms other than firm \(i\) is no less than \(\ubar{U}\) with probability \(1\). It is thus necessary that these firms charge a zero price and provide useless information. By "no signaling what you don't know," the same is true if firm \(i\) follows the equilibrium strategy to price at \(0\). \end{proof}
    



\subsection{Lemma \ref{no info dominated} Proof}\label{noinfodomproof}
\begin{proof}
The effective-value distributions induced by other firms' strategies \(\sigma_{-i}\) define a demand function, denoted by \(D_{i}\left( w;\sigma
_{-i}\right) \), for firm \(i\) in the effective value \(w\) she generates.
This demand function specifies the expected probability of winning the
effective value competition when firm \(i\) realizes \(w\) and other firms play strategies \(\sigma_{-i}\). It is immediate that \(D\left( w;\sigma_{-i}\right) =0\) for all \(w<0\) and for all \(\sigma_{-i}\), as the consumer has an outside option of zero.

Take any arbitrary strategy \(\sigma_{-i}\) of the other firms. Firm \(i\) prices
at \(p\) and provides no information (i.e., \(H_{i}\left( \cdot ;p\right)\) is
degenerate at \(\ubar{U}-p\)), and its expected profit is, therefore, \(p\times D\left( 
\ubar{U}-p;\sigma _{-i}\right) \). This profit is zero if \(p>\ubar{U}\), so the strategy is clearly strictly dominated (by say full information
and price at \(\frac{ \bar{U}-\ubar{U}}{2}\)). Let's focus on \(p\leq \ubar{U}\).

Suppose, instead, firm \(i\) provides full information and prices at \(p^{\prime
}\equiv \bar{U}-\ubar{U}+p\). This yields the expected profit
\[\begin{split}
   p^{\prime }\times \mu \times D\left( \bar{U}-p^{\prime };\sigma _{-i}\right) &= \left( \bar{U}-\ubar{U}+p\right) \mu \times D\left( \ubar{U}-p;\sigma_{-i}\right)\\
   &= \left[p+\left( 1-\mu \right) \left( \ubar{U}-p\right) \right]\times D\left( \ubar{U}-p;\sigma_{-i}\right)\\
   &\geq p D\left(\ubar{U}-p;\sigma _{-i}\right) \text{,}
\end{split}\]
where the inequality is strict whenever \(p=\ubar{U}\). 

Consequently, the strategy with no information and a price distribution \(\Phi
_{i}\in \Delta \left[ 0,1\right]\) that does not assign probability one
to \(\ubar{U}\) is strictly dominated by the strategy with price
distribution \(\Phi _{i}\left( p-\left( \bar{U}-\ubar{U}\right) \right)\)
and full information. The domination is weak if \(\Phi_{i}\) is degenerate at 
\(\ubar{U}\). We do not obtain strict domination because there exists 
\(\sigma _{-i}\) such that no information and pricing at \(\ubar{U}\) is a
best response for firm \(i\).\end{proof}

\subsection{Lemma \ref{demand} Proof}\label{demandlemmaproof}
\begin{proof}
The continuity of the equilibrium demand function follows
from the continuity of the equilibrium effective-value distribution chosen
by the firms.

Suppose \(\ubar{d} \equiv \min \left\{ w:D\left( w;\Phi ,H\right)
>0\right\}\) is strictly positive. As \(D\left( w;\Phi ,H\right)\) is
continuous at \(\ubar{d}\), it is locally convex at \(\ubar{d}\),
i.e., there is a neighborhood of \(\ubar{d}\) such that \(D\left( w;\Phi
,H\right)\) lies strictly below its concavification. Hence it is suboptimal to assign any positive weight on this neighborhood, contradicting the
definition of \(\ubar{d}\). Therefore, \(\ubar{d}=0\). 

Next, recall that the equilibrium profit and, hence, prices are positive. For any \(p\in \supp\left(\Phi \right)\), the possible range of effective values, \(\left[ -p,\bar{U}-p\right]\), contains \(0\). As \(D\left( w;\Phi ,H\right) =0\) for all \(w\leq 0\),
if \(D\left( 0;\Phi ,H\right) =0\), then the demand is locally convex at \(0\),
and assigning weight in some neighborhood of \(0\) is suboptimal, which again
is a contradiction.

It is immediate that \(D\left( w;\Phi ,H\right)\) is weakly increasing over \(\left[ 0,\bar{d}\right]\). If the demand contains any flat segment, then it
contains some maximal flat segment, say nonempty interval \(\left[ a,b\right] \subset \left[ 0,\bar{d}\right]\), such that \(D\left( a;\Phi ,H\right)
=D\left( b;\Phi ,H\right)\) and \[D\left( a-\varepsilon ;\Phi ,H\right)
<D\left( a;\Phi ,H\right) =D\left( b;\Phi ,H\right) <D\left( b+\varepsilon
;\Phi ,H\right)\text{,}\] for any \(\varepsilon >0\). However, the demand function is locally convex at \(b\), contradicting the maximality of the flat segment.\end{proof}

\subsection{Lemma \ref{opt info} Proof}\label{optinfoproof}

\begin{proof}
Let \(D\left( w\right)\) be the demand function facing an individual firm in its realized effective value \(w\). The existence of the best response follows
from the continuity of the continuity of the objective and the compactness
of the strategy space. Denote by \(\Pi^{\ast}\) the optimal profit.

\begin{claim}
The optimal profit \(\Pi^{\ast}\) can be attained by an effective-value
distribution with a support of no more than 2 values.
\end{claim}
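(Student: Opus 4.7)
The plan is to invoke the ternary-support result from \cite{avp} and then reduce the ternary to a binary, possibly via a price adjustment. Fix any optimal strategy with price \(p\) in its support, and consider its induced effective-value distribution \(H^{*}\). By applying the concavification analysis of \cite{avp} at price \(p\), I may assume without loss that \(H^{*}\) is supported on at most three points \(\{-p, w^{*}, U^{*}\}\), where \(U^{*}\) is the induced reservation value and \(D\) restricted to the support is concave. Since \(D(-p) = 0\), this concavity amounts to the inequality \(m_{2} D(w^{*}) \geq m_{1} D(U^{*})\), where I set \(m_{1} \equiv w^{*} + p\) and \(m_{2} \equiv U^{*} + p\).

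The core contraction step is to replace \(H^{*}\) by the binary distribution on \(\{-p, w^{*}\}\) with the same mean \(\ubar{U} - p\) and same price \(p\). Letting \(\beta, \gamma\) denote the ternary weights at \(w^{*}, U^{*}\) (so that \(\beta m_{1} + \gamma m_{2} = \ubar{U}\) from the mean constraint), a direct calculation yields that the expected-demand gap between this binary and \(H^{*}\) equals \((\gamma/m_{1})[m_{2} D(w^{*}) - m_{1} D(U^{*})]\), which is nonnegative by the concavity inequality. This binary is feasible---the required mass \(\ubar{U}/m_{1}\) at \(w^{*}\) lies in \([0,1]\)---precisely when \(m_{1} \geq \ubar{U}\), in which case it attains \(\Pi^{*}\) with binary support.

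In the remaining case \(m_{1} < \ubar{U}\), the binary on \(\{-p, w^{*}\}\) at price \(p\) is infeasible, and I would adjust the price to restore feasibility. The key structural observation is that binary strategies supported on \(\{-p', w\}\) at price \(p'\) have profit \(p' \ubar{U} D(w)/(p' + w)\), strictly increasing in \(p'\) for fixed \(w\); so raising \(p'\) to the maximal feasible value \(\bar{U} - w\) produces the full-information strategy at that price, with profit \((\bar{U} - w)\mu D(w)\). The main obstacle of the proof is to show that this price-adjusted binary still attains \(\Pi^{*}\) in the infeasible case---a careful matching argument is required, combining the concavification of \(D\) with the optimality of the original ternary strategy to select the appropriate reservation value \(w\) and to confirm that the resulting binary effective-value distribution is induced by a Bayes-plausible signal.
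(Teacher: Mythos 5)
Your skeleton is the paper's: invoke the ternary-support result of \cite{avp}, contract the ternary to a binary in a mean-preserving way using concavity of the demand restricted to the support, and adjust the price when the same-price binary is infeasible. Your central computation---that the expected-demand gap between the binary on \(\left\{-p,w^{*}\right\}\) and the ternary equals \(\left(\gamma/m_{1}\right)\left[m_{2}D\left(w^{*}\right)-m_{1}D\left(U^{*}\right)\right]\geq 0\)---is correct and reproduces the paper's Case 2. However, your opening step is not without loss: the concavification in \cite{avp} places the residual mass (below the reservation value) on the support of the concave envelope of \(D\) evaluated at the residual mean, and both of those points can be nonnegative interior values with \(D\left(w_{0}\right)>0\) (for instance when \(D\) has a convex stretch above \(0\)). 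Your argument leans on \(D\left(-p\right)=0\) to extract the one-sided inequality \(m_{2}D\left(w^{*}\right)\geq m_{1}D\left(U^{*}\right)\), so it does not cover that configuration. The paper handles it as a separate case: when \(w_{0}\geq 0\), non-collinearity of the three support points gives \(D\left(w_{2}\right)/\left(w_{2}+p\right)<D\left(w_{1}\right)/\left(w_{1}+p\right)\leq D\left(w_{0}\right)/\left(w_{0}+p\right)\), and the ternary profit is then dominated by the binary on \(\left\{-p,w_{0}\right\}\)---everything is pooled \emph{down} to the lowest support point, not onto the middle one. You need this second projection argument; the contraction you wrote does not deliver it.

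The second gap is the one you flag yourself: you leave the infeasible case \(m_{1}<\ubar{U}\) unresolved, calling the comparison with \(\Pi^{*}\) "the main obstacle." In fact your own observations nearly close it, and more cheaply than the route you sketch. The gap computation is purely algebraic and does not require the weight \(\ubar{U}/m_{1}\) to be at most one, so the \emph{formal} profit \(p\,\ubar{U}\,D\left(w^{*}\right)/\left(p+w^{*}\right)\) of the infeasible binary already bounds \(\Pi^{*}\) from above. The first feasible member of your price family is \(p^{\prime}=\ubar{U}-w^{*}\), at which the weight equals one, i.e., the degenerate (no-information) distribution at \(w^{*}\) with profit \(\left(\mu-c-w^{*}\right)D\left(w^{*}\right)\); by your own monotonicity in \(p^{\prime}\) this weakly exceeds \(\Pi^{*}\) and has singleton support, which proves the claim. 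This is exactly how the paper disposes of the case---it exhibits the no-information deviation at price \(\mu-c-w\) and shows it strictly dominates, a contradiction. There is no need to push all the way to \(p^{\prime}=\bar{U}-w^{*}\) and full information, nor to fret over Bayes-plausibility there; establishing full information is the job of the remainder of the lemma, not of this claim.
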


\begin{proof}
As shown in \(\S\)3.3.1 in \cite{avp}, when
the price is fixed, there is an optimal effective-value distribution with support on no more than \(3\) values. It is therefore without loss to suppose an optimal strategy when firms can choose price consists of a price \(p^{\ast}\) and an effective-value distribution \(H^{\ast}\) supported on \(\left\{
w_{0},w_{1},w_{2}\right\} \subset \left[ -p^{\ast },\bar{U}-p^{\ast }\right]\). If \(\left\{ w_{0},w_{1},w_{2}\right\}\) are not distinct, we are done, so suppose \(w_{0}<w_{1}<w_{2}\). Note that \(w_{2}\), the maximum effective
value on the support, corresponds to the reservation value of the firm's
signal. Moreover, the weight assigned to the respective effective values are 
\[H^{\ast }\left( w_{0}\right)  = \left( 1-\frac{c}{1-\left( w_{2}+p^{\ast
}\right) }\right) \times \frac{w_{1}-a_{p^{\ast }}\left( w_{2}\right) }{w_{1}-w_{0}}\text{,}\]
\[H^{\ast }\left( w_{1}\right) -H^{\ast }\left( w_{0}\right)  = \left( 1-\frac{c}{1-\left( w_{2}+p^{\ast }\right) }\right) \times \frac{a_{p^{\ast
}}\left( w_{2}\right) -w_{0}}{w_{1}-w_{0}}\text{,}\]
and
\[H^{\ast }\left( w_{2}\right) -H^{\ast }\left( w_{1}\right)  = \frac{c}{1-\left( w_{2}+p^{\ast }\right) }\text{,}\]
where
\[\begin{split}
    a_{p^{\ast }} \colon \left[ \ubar{U}-p^{\ast },\bar{U}%
-p^{\ast }\right] &\rightarrow \left[ -p^{\ast },\ubar{U}-p^{\ast }\right]\\
w_2 &\underset{a_{p^{\ast}}}{\mapsto} \frac{\left( \mu -c-p^{\ast }\right) %
\left[ 1-\left( w_{2}+p^{\ast }\right) \right] -cw_{2}}{1-\left(
w_{2}+p^{\ast }\right) -c}
\end{split}\]
Consequently, the optimal profit \(\Pi^{\ast}\) is given by%
\[
\Pi ^{\ast }\equiv p^{\ast }\times \left[ \left( 1-\frac{c}{1-\left(
w_{2}+p^{\ast }\right) }\right) \left[ \frac{w_{1}-a_{p^{\ast }}\left(
w_{2}\right) }{w_{1}-w_{0}}D\left( w_{0}\right) +\frac{a_{p^{\ast }}\left(
w_{2}\right) -w_{0}}{w_{1}-w_{0}}D\left( w_{1}\right) \right] +\frac{c}{%
1-\left( w_{2}+p^{\ast }\right) }D\left( w_{2}\right) \right] \text{.}
\]%
Furthermore, it is without loss to suppose that on the graph of the demand function, \(\left\{ \left( w_{0},D\left( w_{0}\right) \right) ,\left(
w_{1},D\left( w_{1}\right) \right) ,\left( w_{2},D\left( w_{2}\right)
\right) \right\} \) do not lie on a straight line. If this were not the case,
an effective-value distribution supported on \(\left\{ w_{0},w_{2}\right\}\)
coupled with price \(p^{\ast}\) would already generate the optimal profit \(\Pi^{\ast }\), so we are done.

We consider the two cases in turn: 1. \(w_0 \geq 0\) and 2. \ \(w_0 < 0\).

\medskip

\noindent

\textbf{Case 1. \(w_0 \geq 0\).} This implies \(D\left( w_{0}\right) >0\), and as
\(\left\{ \left( w_{0},D\left( w_{0}\right) \right) ,\left(
w_{1},D\left( w_{1}\right) \right) ,\left( w_{2},D\left( w_{2}\right)
\right) \right\}\) do not lie on a straight line,
\[
\frac{D\left( w_{2}\right) }{w_{2}+p^{\ast }}<\frac{D\left( w_{1}\right) }{%
w_{1}+p^{\ast }}\leq \frac{D\left( w_{0}\right) }{w_{0}+p^{\ast }}\text{.}
\]%
Using this inequality, the optimal profit \(\Pi^{\ast}\) must be strictly less than
\[\begin{split}
    &p^{\ast }\times \left( 1-\frac{c}{1-\left(
w_{2}+p^{\ast }\right) }\right) \left( \frac{w_{1}-a_{p^{\ast }}\left(
w_{2}\right) }{w_{1}-w_{0}}D\left( w_{0}\right) +\frac{a_{p^{\ast }}\left(
w_{2}\right) -w_{0}}{w_{1}-w_{0}}\frac{w_{1}+p^{\ast }}{w_{0}+p^{\ast }}%
D\left( w_{0}\right) \right)\\ &+ p^{\ast }\times\left(\frac{c}{1-\left( w_{2}+p^{\ast }\right) }%
\frac{w_{1}+p^{\ast }}{w_{0}+p^{\ast }}D\left( w_{0}\right)\right)\\
&= p^{\ast }\times \frac{\mu -c}{p^{\ast }+w_{0}}D\left( w_{0}\right)\text{ .}
\end{split}\]

If \(w_{0}\geq \ubar{U}-p^{\ast }\), this is the profit the firm can secure by offering price \(p^{\ast}\) coupled with a binary
effective-value distribution supported on \(\left\{ -p^{\ast },w_{0}\right\}\). If \(w_{0}<\ubar{U}-p^{\ast }\), then 
\[
\Pi ^{\ast }<p^{\ast }\times \frac{\mu -c}{p^{\ast }+w_{0}}D\left(
w_{0}\right) =\left( \left( \mu -c-w_{0}\right) -w_{0}\frac{\mu -c-p^{\ast
}-w_{0}}{p^{\ast }+w_{0}}\right) D\left( w_{0}\right) <\left( \mu
-c-w_{0}\right) D\left( w_{0}\right) \text{,}
\]%
where the last expression is the profit that the firm can secure by offering
price \(\mu -c-w_{0}\) coupled with an effective-value distribution degenerate
at \(w_{0}\) (i.e., no information). Both scenarios contradict the optimality
of \(\Pi^{\ast}\).

\medskip

\noindent \textbf{Case 2. \(w_0 < 0\).} As \(D\left( w\right) =0\) for all \(w<0\) and \(D\left( 0\right)
>0\), \(H^{\ast}\) is optimal only if \(w_{0}=-p^{\ast }\). Moreover, \(D\left(
w_{0}\right) =0\), and the non-collinearity of \(\left\{ \left( w_{0},D\left(
w_{0}\right) \right) ,\left( w_{1},D\left( w_{1}\right) \right) ,\left(
w_{2},D\left( w_{2}\right) \right) \right\}\) implies
\[
\frac{D\left( w_{2}\right) }{w_{2}+p^{\ast }}<\frac{D\left( w_{1}\right) }{%
w_{1}+p^{\ast }}\text{.}
\]%
Using this inequality and following steps similar to the above, 
\[
\Pi ^{\ast }=p^{\ast }\times \left[ \left( 1-\frac{c}{1-\left( w_{2}+p^{\ast
}\right) }\right) \frac{a_{p^{\ast }}\left( w_{2}\right) +p^{\ast }}{%
w_{1}+p^{\ast }}D\left( w_{1}\right) +\frac{c}{1-\left( w_{2}+p^{\ast
}\right) }D\left( w_{2}\right) \right] <p^{\ast }\times \frac{\mu -c}{%
p^{\ast }+w_{1}}D\left( w_{1}\right) \text{.}
\]

If \(w_{1}\geq \ubar{U}-p^{\ast }\), the last expression is the profit
that the firm can secure by offering price \(p^{\ast}\) coupled with a binary
effective-value distribution supported on \(\left\{ -p^{\ast },w_{1}\right\}\). If \(w_{1} < \ubar{U}-p^{\ast }\), 
\[
\Pi ^{\ast }<p^{\ast }\times \frac{\mu -c}{p^{\ast }+w_{1}}D\left(
w_{1}\right) =\left( \left( \mu -c-w_{1}\right) -w_{1}\frac{\mu -c-p^{\ast
}-w_{1}}{p^{\ast }+w_{1}}\right) D\left( w_{1}\right) <\left( \mu
-c-w_{1}\right) D\left( w_{1}\right) \text{,}
\]%
which is the profit that the firm can secure by offering price \(\mu -c-w_{1}\)
coupled with an effective-value distribution degenerate at \(w_{1}\) (i.e., no
information). Both scenarios contradict the optimality of \(\Pi^{\ast}\).
\end{proof}

Suppose the optimal profit \(\Pi^{\ast}\) can be attained by price \(p^{\ast}\) and an effective-value distribution \(H^{\ast}\) supported on \(\left\{
w_{0},w_{2}\right\}\). It remains to show that if \(H^{\ast}\) is not
degenerate, i.e., if \(w_{0}<w_{2}\), then it must have \(w_{0}=-p^{\ast }\) and \(w_{2}=\bar{U}-p^{\ast }\).

Suppose \(H^{\ast}\) is not degenerate. Then \(w_{0}<\mu -c-p^{\ast }<w_{2}\),
as the mean of \(H^{\ast}\) is \(\mu -c-p^{\ast }\). We first show that an
effective-value distribution supported on \(\left\{ -p^{\ast },w_{2}\right\}\)
together with price \(p^{\ast}\) can also attain \(\Pi^{\ast}\).

Suppose the effective-value distribution supported on \(\left\{ -p^{\ast
},w_{2}\right\}\) does not yield the optimal profit. Then it is necessary
that \(w_{0}>0\) and \(D\left( w_{0}\right) >0\). Moreover, 
\[
\frac{D\left( w_{2}\right) }{w_{2}+p^{\ast }}<\frac{D\left( w_{0}\right) }{%
w_{0}+p^{\ast }}\text{,}
\]%
for otherwise the effective value distribution supported on \(\left\{
-p^{\ast },w_{2}\right\}\) already yields a strictly higher profit than that
supported on \(\left\{ w_{0},w_{2}\right\}\). Using this inequality, the
optimal profit can be bounded as follows.
\[\begin{split}
    \Pi ^{\ast } &= p^{\ast }\times \left[ \frac{w_{2}-\left( \mu -c-p^{\ast}\right) }{w_{2}-w_{0}}D\left( w_{0}\right) +\frac{\left( \mu -c-p^{\ast}\right) -w_{0}}{w_{2}-w_{0}}D\left( w_{2}\right) \right]\\
    &< p^{\ast }\times \left[ \frac{w_{2}-\left( \mu -c-p^{\ast }\right) }{w_{2}-w_{0}}+\frac{\left( \mu -c-p^{\ast }\right) -w_{0}}{w_{2}-w_{0}}\frac{w_{2}+p^{\ast }}{w_{0}+p^{\ast }}\right] D\left( w_{0}\right)\\
    &= p^{\ast }\times \frac{\mu -c}{p^{\ast }+w_{0}}D\left( w_{0}\right)\\
    &= \left( \left( \mu -c-w_{0}\right) -w_{0}\frac{\mu -c-p^{\ast }-w_{0}}{p^{\ast }+w_{0}}\right) D\left( w_{0}\right) < \left( \mu -c-w_{0}\right) D\left( w_{0}\right)\text{,}
\end{split}\]
which is the profit that the firm can secure by offering price \(\mu -c-w_{0}\)
coupled with an effective-value distribution degenerate at \(w_0\) (i.e., no
information). This contradicts the optimality of \(\Pi^{\ast}\).

Finally, we show that \(w_{2}=\bar{U}-p^{\ast }\), i.e., \(H^{\ast}\) provides
full information. Suppose to the contrary that \(w_{2}<\bar{U}-p^{\ast }\). The optimal profit is 
\[
\Pi ^{\ast }=p^{\ast }\times \frac{\mu -c}{w_{2}+p^{\ast }}D\left(
w_{2}\right) \text{.}
\]%
As \(w_{2}<\bar{U}-p^{\ast }\), the firm can raise the price to some \(p^{\prime }>p^{\ast }\) while maintaining the support of the effective value
distribution at \(\left\{ -p^{\prime },w_{2}\right\}\). The resulting profit
is%
\[
p^{\prime }\times \frac{\mu -c}{w_{2}+p^{\prime }}D\left( w_{2}\right)
>p^{\ast }\times \frac{\mu -c}{w_{2}+p^{\ast }}D\left( w_{2}\right) =\Pi
^{\ast }\text{,}
\]%
contradicting the optimality of \(\Pi^{\ast}\).
\end{proof}

\subsection{Proposition \ref{comp stat} Proof}\label{compstatproof}
\begin{proof}
The consumer's welfare under perfect information \(u^{PI}\left( n\right)\) is
increasing in \(n\), as%
\begin{eqnarray*}
\frac{du^{PI}\left( n\right) }{dn} &=&-\frac{\mu -c}{\mu }\left( 1-\mu
\right) ^{n-1}\left[ \mu +\left( 1-\mu +n\mu \right) \ln \left( 1-\mu
\right) \right]  \\
&\geq &-\frac{\mu -c}{\mu }\left( 1-\mu \right) ^{n-1}\left[ \mu +\left(
1-\mu +2\mu \right) \ln \left( 1-\mu \right) \right]  \\
&=&-\frac{\mu -c}{\mu }\left( 1-\mu \right) ^{n-1}\left[ \mu +\left( 1+\mu
\right) \ln \left( 1-\mu \right) \right]  \\
&>&0\text{,}
\end{eqnarray*}%
where the last inequality follows from the fact that the term \(\left[ \mu
+\left( 1+\mu \right) \ln \left( 1-\mu \right) \right]\) is decreasing in \(\mu\) and equals \(0\) at \(\mu =0\). The proposition follows by noting that \(u^{PI}\left( 2\right) =\mu \left( \mu -c\right) <\mu -c\) and \(\lim_{n\rightarrow \infty }u^{PI}\left( n\right) = \frac{\mu-c}{\mu}
>\mu -c\).\end{proof}

\end{document}